\documentclass[review,3p,authoryear]{elsarticle}
\usepackage{hyperref}
\usepackage{adjustbox}
\usepackage{graphicx} 
\usepackage{longtable}
\usepackage{booktabs}
\usepackage{appendix}
\usepackage{pdflscape}
\usepackage{chngcntr}
\usepackage[labelfont=bf]{caption}
\usepackage[flushleft]{threeparttable}
\usepackage[online]{threeparttablex}
\usepackage{siunitx}
\usepackage{amsthm, amssymb, amsmath}
\usepackage{bbm}
\usepackage{marginnote}
\usepackage{tikz} 
\usepackage{pgf}
\usepackage{tkz-euclide}
\usepackage{xcolor}

\newtheorem{theorem}{Theorem}[section]

\counterwithin{claim}{section}
\newtheorem{observation}{Observation}[section]
\counterwithin{observation}{section}
\newtheorem{corollary}{Corollary}[section]
\counterwithin{corollary}{section}

\counterwithin{lemma}{section}
\newtheorem{remark}{Remark}[section]
\counterwithin{remark}{section}
\newtheorem{example}{Example}[section]
\counterwithin{example}{section}
\usetikzlibrary{automata,fit,positioning}
\usetikzlibrary{calc,positioning,shapes.geometric}
\usetikzlibrary{arrows,shapes,backgrounds}
\usetikzlibrary{petri}
\usepackage{enumerate}
\usepackage{algorithm}
\usepackage[noend]{algpseudocode}
\usepackage{multirow}
\usepackage{xr}

\makeatletter
\@addtoreset{algorithm}{section}
\makeatother

\renewcommand{\thealgorithm}{\thesection.\arabic{algorithm}}

\makeatletter
\def\BState{\State\hskip-\ALG@thistlm}
\makeatother

\makeatletter
\newenvironment{breakablealgorithm}
{
		\begin{center}
			\refstepcounter{algorithm}
			\hrule height.8pt depth0pt \kern2pt
			\renewcommand{\caption}[2][\relax]{
				{\raggedright\textbf{\fname@algorithm~\thealgorithm} ##2\par}%
				\ifx\relax##1\relax 
				\addcontentsline{loa}{algorithm}{\protect\numberline{\thealgorithm}##2}%
				\else 
				\addcontentsline{loa}{algorithm}{\protect\numberline{\thealgorithm}##1}%
				\fi
				\kern2pt\hrule\kern2pt
			}
		}{
		\kern2pt\hrule\relax
	\end{center}
}
\makeatother
\usepackage{amsmath,xparse}
\makeatletter
\NewDocumentCommand{\lplabel}{o m}{%
	\makebox[0pt][r]{(#2)\hspace*{2em}}%
	\IfNoValueF{#1}
	{\def\@currentlabel{#2}\ltx@label{#1}}
}
\makeatother

\begin{document}
\begin{frontmatter}
\title{The Project Scheduling Interdiction Problem with Delay Groups}

\author[inst1]{Fei Wu}

\affiliation[inst1]{organization={Research Center for Operations Management},
            addressline={KU Leuven}, 
            country={Belgium}}

\author[inst1]{Erik Demeulemeester}
\author[inst1]{Jannik Matuschke}
\begin{abstract}

Large-scale projects are frequently delayed by correlated disruptions: when a shared input such as a common supplier, a specialized team, or a supporting platform degrades, all dependent activities are slowed down simultaneously. This paper introduces
delay groups to capture such disruptions: a delay group is a set of activities whose delays stem from a common cause, described jointly
by an uncertainty set. Our model takes the perspective of an interdictor that, subject to a budget of $k$ groups, selects which groups to disrupt so as to maximize the project makespan. The interdictor can extend activity durations within each disrupted
group by delays from a group-specific uncertainty set, while non-disrupted activities keep their nominal duration. We study the complexity of the resulting Project Scheduling Interdiction Problem
with Delay Groups (\textsc{PSIP-DG}), which provides a worst-case stress test of the schedule. The problem is computationally intractable ($N\!P$-hard) for general polyhedral uncertainty sets, even for a single delay group with a continuous knapsack constraint.
For budgeted uncertainty sets, we prove $N\!P$-hardness both when all activities of a disrupted group are delayed and when only one activity per group may be delayed, and derive an inapproximability
bound of $1-1/e+\epsilon$ for the former case. We further develop a greedy heuristic with approximation guarantee $k$ and two structure-based heuristics with initializations and neighborhoods from tractable special cases. Experiments on $5{,}000$-activity
networks show that the heuristics match the solution quality of an exact solver at substantially lower running times, in some cases finding strictly better solutions.
\end{abstract}

\begin{keyword}
Project scheduling; Interdiction; Delay groups; Computational complexity; Robust optimization
\end{keyword}
\end{frontmatter}

\section{Introduction}
\label{sec:intro}

Project scheduling is a part of project management that determines the sequence or scheduling plan for a series of related activities that are constituents of the project. In the absence of resource constraints, the critical path determines the makespan of the project \citep{goldratt1997critic}. Delays of individual activities have
therefore long been the natural unit of analysis when assessing a schedule's robustness. In large-scale projects, however, a disruption typically affects several activities at once. Activities often share a common enabler, a supplier delivering critical components, a team of specialists supervising several tasks, or a platform supporting parallel operations, and when that enabler degrades, all dependent activities are slowed down together. The 2020–2023 global semiconductor shortage illustrates this at an industry scale: as the supply of individual chip families tightened, manufacturers did not halt production but operated at a reduced pace, with component lead times prolonged by several months \citep{mohammad2022global}. The development of the Boeing 787 shows the same mechanism within a single project. Its assembly relied on approximately fifty tier-1 strategic suppliers, and shortages of shared components, most notably fasteners, simultaneously delayed the assembly activities depending on them; these correlated slowdowns accumulated along the precedence network, ultimately amounting to years of delay and billions in cost
overruns \citep{tang2009managing}.

Two features of these episodes matter for modeling. First, the affected activities were slowed down rather than stopped, so the disruption manifests as an increase in processing times. Second, the delays were correlated: activities sharing a dependency were hit together, though to different extents. Robust project scheduling models typically treat activity delays as independent perturbations. Under a common-cause disruption, however, several activities on the critical path, or on parallel paths, can be delayed at once, so an independence assumption can substantially understate the worst-case completion time. To capture this structure, we introduce delay groups: sets of activities whose delays are driven by a common underlying cause, with the joint delay within each group described by an uncertainty set.

We study this correlation structure in a precedence-constrained setting without resource constraints, so that any computational difficulty arises from the delay groups alone: our hardness results show that the problem is already $N\!P$-hard and hard to approximate in this basic setting, so these negative results extend to more complex settings, including resource-constrained variants. Moreover, the disruptions we model, namely the degradation of a shared enabler, slow down all dependent activities simultaneously rather than forcing them to queue for capacity, and are therefore naturally captured by processing time increases rather than by resource constraints. 

Neither of these disruptions was caused by a deliberate attacker; the semiconductor shortage and Boeing's supplier failures arose from pandemic shocks and coordination failures. Their lesson, however, is best understood adversarially: the damage came from the combination of correlated delays, and the worst-case combination was never assessed in advance. Evaluating each shared dependency in isolation is insufficient, as the joint impact of simultaneous disruptions on the critical path can far exceed the sum of their individual effects. Identifying it is not a matter of simple enumeration: the difficulty stems both from selecting which groups to disrupt and from allocating delays within each group, and our complexity results show that the problem is $N\!P$-hard even in restricted settings, such as a single delay group or groups of at most two activities. 

We therefore take an interdiction perspective: a hypothetical adversary, allowed to disrupt at most $k$ delay groups with delays chosen from the uncertainty sets $Q_r$, selects the combination to maximize the project makespan. The resulting worst-case makespan serves as a stress test of the schedule: it quantifies how much the completion time can deteriorate under a bounded number of correlated disruptions. \citet{szmerekovsky2023project} argue that such worst-case risk assessment enables organizations to prepare not only for failures catastrophic to the project, but also for those catastrophic to the firm. The budget $k$ and the uncertainty sets jointly keep the analysis from becoming overly conservative: bounding the number of simultaneously disrupted groups follows the budget-of-uncertainty principle that is standard in robust scheduling \citep{bertsimas2004price, bruni2017adjustable, bold2021compact}. Modeling delays via uncertainty sets rather than via probability distributions is also motivated by informational considerations. For one-off projects such as new aircraft development, the joint probability distribution of such delays is practically impossible to estimate. What project managers can assess, however, is which activities share a common dependency and the plausible range of the resulting delays, and these are exactly the inputs our model requires. The output of the model is directly actionable: it identifies the $k$ most vulnerable delay groups and the associated worst-case completion time, indicating where robustness investments such as backup suppliers, redundant expertise, or schedule buffers yield the greatest protection. We call the resulting problem the Project Scheduling Interdiction Problem with Delay Groups (\textsc{PSIP-DG}).

\paragraph{Literature review} Robust optimization has been extensively studied for scheduling problems under polyhedral or budgeted uncertainty, building on the budget-of-uncertainty framework of \citet{bertsimas2004price} introduced above. Subsequent work by \citet{bertsimas2011theory} and \citet{bertsimas2022robust} extends this framework by formalizing a unified robust optimization approach and deriving tractable formulations for common uncertainty sets, including box, ellipsoidal, budgeted, and polyhedral sets. Building on this foundation, many studies have explored robust scheduling under budgeted uncertainty, with much of the research focusing on machine scheduling environments \citep{juvin2025flow}; see \citet{herroelen2005project} for a survey of project scheduling under uncertainty. In the context of project scheduling, \citet{bruni2017adjustable} introduce a two-stage robust
resource-constrained project scheduling problem with uncertain activity durations, solved via Benders decomposition. \citet{bold2021compact} later reformulate the inner adversarial problem as a longest-path problem and derive a compact MILP through duality, outperforming Benders-based methods in computational experiments; this formulation is subsequently extended to a multi-mode setting \citep{bold2022faster}. While these robust resource-constrained models capture how scarce resources couple activities \citep{lambrechts2008proactive}, correlated delays arising from a common cause have not been modeled.

Beyond uncertainty, adversarial and interdiction-based perspectives have received growing attention. \citet{brown2005complexity} study the computational complexity of several project interdiction models and show that the special case of \textsc{PSIP-DG} with singleton groups (\textsc{PSIP}) can be solved via dynamic programming, whereas the problem becomes strongly $N\!P$-hard if the decision maker is allowed to expedite certain
tasks. \citet{brown2009interdicting} formulate the project interdiction problem as a Stackelberg game between a project manager and an attacker who aims to delay project completion; their bilevel optimization approach illustrates how limited interdiction resources can be allocated to maximally delay the project by targeting critical activities, which amounts to a longest-path interdiction problem on the project network. In these models, interdiction acts on individual activities, with the budget limiting the number or cost of activities that can be delayed; in \textsc{PSIP-DG}, by contrast, the interdictor attacks entire delay groups, whose delays are jointly constrained by an uncertainty set, so that a single interdiction decision affects several activities in a correlated manner.

In line with this work, \citet{pinker2013managing} propose a secret project management model in which the project manager strategically invests resources in deception decisions to minimize the exposed time. They show that this problem is strongly $N\!P$-hard in the general sense and polynomially solvable when no deception is allowed. \citet{pinker2014complexity} further explore the computational
complexity of this model, establishing hardness results and identifying tractable special cases. \citet{gutin2015interdiction} investigate an interdiction game on a PERT network with stochastic (Markovian) task durations, formulating both static and dynamic strategies for the interdictor; under a memoryless assumption on
activity durations, they derive tractable solution methods for each case. Unlike this stochastic model, which requires distributional assumptions on the task durations, our formulation, which is based on uncertainty sets, only requires support information on the delays. \citet{klastorin2013optimal} develop a stochastic dynamic programming approach to optimize slack allocation and activity compression in anticipation of a single disruptive event.  However, none of these works consider disruptions that are coupled across groups of tasks. To the best of our knowledge, this paper is the first to introduce delay groups into project scheduling interdiction models.

Our work is closely related to network interdiction problems. This class of problems is well understood for the case in which the delay scenarios are described by a bound on the total number of disruptions or on the cost of increasing individual durations \citep{brown2009interdicting, israeli2002shortest, sefair2016dynamic, wood2010bilevel}; for reviews of the deterministic and stochastic network interdiction literature, we refer to \citet{wood2010bilevel}, \citet{smith2020survey}, and \citet{morton2010stochastic}. The problem we study is a longest-path interdiction problem, whereas most of the existing literature concerns shortest-path interdiction.
\citet{fulkerson1977maximizing} consider a shortest-path interdiction model in which the length of each arc can be increased continuously and show that it can be solved by parametric linear programming, equivalent to a minimum-cost network flow problem, in polynomial time. When interdiction decisions are binary, however, the problem becomes
$N\!P$-hard \citep{ball1989finding}, and \citet{boros2006inapproximability} provide inapproximability bounds for several variants.

Another relevant problem is the $N\!P$-hard multiple-constraint shortest path (MCSP) problem, which aims to find the least-cost path between a pair of nodes such that the sum of additive edge weights does not exceed a specified limit \citep{garey1979computers}. Its single-constraint special case has been studied extensively, with
numerous exact approaches, approximation schemes, and heuristics available in the literature \citep{lozano2013exact, pugliese2013survey, chen2008two, kuipers2002overview}. In \textsc{MCSP}, however, the constraints restrict the chosen path directly; in \textsc{PSIP-DG}, by contrast, the group-selection variables interact with the path through the bilinear objective.

\emph{Our contribution and outline}. In Section~\ref{sec:definition}, we propose a new model for the project scheduling problem with delay groups and provide formal definitions for the uncertainty set within each group, including polyhedral and budgeted uncertainty sets. In Section~\ref{sec:complex}, we identify some tractable cases of \textsc{PSIP-DG}, including the special case of \textsc{PSIP-DG}, in which each group has only one activity(PSIP). Prior work has shown that PSIP becomes $N\!P$-hard when interdiction costs are non-uniform and interdiction decisions are discrete, but remains polynomially solvable when interdiction levels are continuous under uniform costs. Our work fills a gap in the literature by analyzing a variant that simultaneously considers continuous interdiction decisions and non-uniform interdiction costs: we prove that this continuous PSIP with non-uniform interdiction costs is $N\!P$-hard, and this result directly implies the $N\!P$-hardness of \textsc{PSIP-DG}. Using two distinct reductions from the \textsc{Max $ k $-cover} and 3-\textsc{SAT} problems, we show that \textsc{PSIP-DG} is $N\!P$-hard even for more restricted uncertain activity durations in the following two special cases: (i) all activities in the group $ V_r $ can be delayed, and (ii) exactly one activity is allowed to be delayed in each group. We further investigate the approximability of the general \textsc{PSIP-DG} problem and show that the problem does not admit a $(1-1/\mathrm{e}+\epsilon)$-approximation for any $ \epsilon>0 $ unless $ P = N\!P$. The inapproximability result also holds for case (i).

For \textsc{PSIP-DG} with budgeted uncertainty sets, we develop approximation and heuristic algorithms in Section~\ref{sec:heuristics}. We present a greedy heuristic with an approximation ratio exactly equal to the interdiction budget $k$, and we show that the joint impact of a set of delay groups is not
determined by their individual contributions. We further propose two structure-based heuristics that
search the two decision spaces of the problem: the subpath reoptimization search (SRS) iteratively improves a candidate longest path, initialized by relaxing the group restrictions, while the group selection search (GSS) iteratively improves the set of attacked groups, initialized by the exact per-group decomposition. In Section~\ref{sec:experiment}, we evaluate both heuristics against an exact MIP approach on networks ranging from $50$ to over $5{,}000$ nodes and up to $200{,}000$ edges. The heuristics provide near-optimal solutions at a fraction of the running time; for large-scale instances, GSS matches the quality of
the best solutions found by Gurobi within the one-hour time limit, in some cases finding strictly better ones. 
Finally, Section~\ref{sec:conclusions} concludes the paper and discusses future research directions.

\section{Problem definition}
\label{sec:definition}      
A project is represented by a \emph{Project Evaluation and Review
Technique} (PERT) network: a directed acyclic graph $G=(V,E)$ with $V=\{0,1,\dots,n,n+1\}$, where each node is an activity and each arc $(i,j)\in E$ represents a precedence constraint, meaning that activity $j$ can only start once activity $i$ is completed. Nodes $0$ and $n+1$ are
dummy activities marking the start and the end of the project. Given a vector of activity durations $d\in\mathbb{R}^{V}_{+}$, the length of a path $P$ is the sum of its node weights $\sum_{v\in V(P)} d_v$, where $V(P)$ denotes the nodes on $P$, and the makespan of the project is the length of a longest path from $0$ to $n+1$; this path is the \emph{critical path}. For any integer $m$, we use $[m]$ to denote the set $\{1,\dots,m\}$.
 
  We now formally introduce the model for \textsc{PSIP-DG}. We assume that the set of activities $V$ is partitioned into $m$ subsets $V_1,\dots,V_m$, which we call \emph{delay groups}. We denote the family of delay groups by $\mathcal{V} := \{V_r \subseteq V : r \in [m]\}$. Without loss of generality, we assume that the delay groups are disjoint (i.e., they do not intersect). If there is an activity that belongs to different delay groups, we can create duplicate copies of the activity scheduled in a sequential order and assign each copy to a different delay group. All predecessor activities are connected to the first copy and all successor activities are connected to the last copy. The duration of the original activity is assigned to one of the copies, while the durations of all other copies are set to zero. This transformation does not affect the project makespan.

   For every delay group $V_r$, $r \in [m]$, there is an associated uncertainty set $Q_r \subseteq \mathbb{R}^{V_r}_{+}$ that describes all feasible delay vectors for the activities in that group. For each $r \in [m]$, let $x^r \in Q_r$ denote a delay vector on $V_r$, and let $z_r \in \{0,1\}$ indicate whether delay group $V_r$ is attacked by the interdictor. The prolonged activity durations are then given by
\[
  \tilde d_i \;=\; d_i + \sum_{r \in [m]:\, i \in V_r} x^r_i z_r, \qquad i \in V,
\]
where $d_i$ is the nominal duration of activity $i$.
The task of the interdictor is to find a subset $ S \subseteq [m]$ with $ |S| \leq k $, where $k$ denotes the interdiction budget, i.e., the maximum number of activity groups that the interdictor is allowed to attack, and delay vectors $x^r \in Q_r$ for
$r \in S$, so that the length of a longest path from $0$ to $n+1$ with respect to the activity durations $\tilde d$ is maximized.
The main parameters and decision variables are given as follows:

\emph{Parameters:}
$G = (V,E)$, $d_i$, $\mathcal{V} = \{V_r\}_{r \in [m]}$, $Q_r$, budget $k$.

\emph{Decision variables:}
$z_r \in \{0,1\}$ for $r \in [m]$, $x^r \in Q_r$ for $r \in [m]$, and start times $s_i$ for $i \in V$.

PSIP-DG can be formulated as the following optimization problem: 
\begin{alignat*}{2}
	&\max\limits_{x,z}\min\limits_{s}&&s_{n+1} \notag\\
	\lplabel[PSIPDG]{PSIP-DG}&\mbox{s.t.}& \quad &\begin{aligned}[t]
		s_{j}-s_{i}&\ge d_i + \sum_{r \in [m]:\, i \in V_r} x^r_i z_r, &{}& \forall (i,j)\in E,  \\
		\sum_{r\in [m]}z_r&\leq k, &{}&  \\
		x^r&\in Q_r, &{}&  \forall~r\in [m]\\ 
		z_r&\in \{0,1\}, &{}&  \forall~r\in [m]\\
        s_0 &= 0,\ s_i\ge 0, &{}& \forall~i\in V
	\end{aligned}
\end{alignat*} 
Subsequently, we introduce the formal definition of the uncertainty set.
\paragraph{Polyhedral uncertainty set}
For every delay group $V_r$, $r \in [m]$, we define a polyhedral
uncertainty set
\[
  Q_r \;=\; \{ x^r \in \mathbb{R}^{V_r}_{+} : A_r x^r \le b_r \},
\]
where $A_r$ and $b_r$ are given data for group $V_r$. We write
\textsc{PSIP-DG}$_{\mathrm{P}}$ for \textsc{PSIP-DG} with polyhedral
uncertainty. The following two examples illustrate how $A_r$ and
$b_r$ encode different disruption structures.

\begin{example}\label{ex:box}
If $A_r$ is the identity matrix, then $Q_r = \{x^r : 0 \le x^r_i \le
b_{r,i}\}$ is a box: each activity has an individual delay limit, and
the delays within the group are unrelated.
\end{example}

\begin{example}\label{ex:budget}
Suppose the activities in $V_r$ require supervision by a shared team
of specialists. If the team is temporarily understaffed, the lost
capacity bounds the delay the group can suffer in total, while each
activity $i$ can absorb at most $u_i$ before its remaining work
stalls. This corresponds to
\[
Q_r=\Bigl\{x^r\in\mathbb{R}^{V_r}_{+}:\ \textstyle\sum_{i\in V_r}
x^r_i\le \beta_r,\ x^r_i\le u_i\ \text{for all } i\in V_r\Bigr\},
\]
in which $A_r$ stacks a row of ones on top of the identity matrix. The aggregate constraint couples the delays within the group: the total delay cannot exceed $\beta_r$, so a larger delay of one activity leaves less room for the others. The interdictor distributes this total over the group so as to maximize the impact on the project
makespan.
\end{example}
\paragraph{Budgeted uncertainty set}
Let $u_{i}^r \in \{0,1\}$ indicate whether activity $i$ in group $r$ is delayed by the interdictor. The budgeted uncertainty set bounds the number of delayed activities in each group $ V_r $ by a number $\ell_r$, and the delay time of any activity $ i\in V $ is fixed, denoted by $ \Delta_i$.  For each delay group $V_r$ with $r \in [m]$, the corresponding budgeted
uncertainty set is defined as
\[
  \mathcal{U}_r^B \;=\; \Bigl\{ x^r \in \mathbb{R}^{V_r}_{+} :
  \exists\, u^r \in \{0,1\}^{V_r} \text{ such that }
  \sum_{i\in V_r} u_{i}^r \le \ell_r, \,
  x^r_i = \Delta_i  u_{i}^r \text{ for all } i\in V_r \Bigr\}.
\]

We denote PSIP-DG with budgeted uncertainty by \textsc{PSIP-DG}$_{\mathrm{B}}$. In practice, correlated disruptions frequently occur in two canonical forms: either an entire subsystem is shut down (for example, when a shared utility such as compressed air or a common IT platform becomes unavailable), or at most a single critical activity within
a group is affected (for example, when a unique expert or specialized machine is diverted so that only one of the tasks depending on it is delayed). To capture these two extremes, we consider the following two special cases of this uncertainty set:

\begin{itemize}
	\item \textsc{PSIP-DG}$_{\mathrm{B}, \infty}$: $ \ell_r\geq |V_r| $, for any $r\in[m]$, all the activities in the group $ V_r $ can be delayed.
	\item \textsc{PSIP-DG}$_{\mathrm{B}, 1}$: $ \ell_r= 1 $, for any $r\in[m]$, only one  activity is allowed to be delayed in each group. 
\end{itemize}
 Next, we give the dual formulation of \textsc{PSIP-DG} with budgeted uncertainty,
\begin{alignat*}{2}
	&\max\limits_{u,y,z}&&\sum_{e=(i,j)\in E} y_e\Bigl( d_i + \sum_{r\in [m]:\, i\in V_r} \Delta_i u_i^r \Bigr) \\
\lplabel[dpu]{$\mathrm{DP}(\mathcal{U}^B)$}	&\mbox{s.t.}& \quad 
	&\sum_{i\in V: e=(i,j)}\begin{aligned}[t]y_e-\sum_{i\in V: e=(j, i)}y_e&=\begin{cases} 
		-1& \text{if } j=0 \\
		1 & \text{if } j=n+1\\
		0 & \text{if } j\in V\setminus\{0,n+1\}
	\end{cases}&{}&   \\
\sum_{r\in [m]}z_r&\leq k &{}&   \\
\sum_{i\in V_r} u_{i}^r&\leq \ell_r\cdot z_r  &{}&   \forall~ r\in [m] \\
 y_e, z_r, u_{i}^r&\in\{0,1\}  &{}&  \forall~ e
 \in E,~r\in [m],~ i\in V_r
\end{aligned}
\end{alignat*}
\paragraph{Illustrative example} Consider a small instance of \textsc{PSIP-DG}$_{\mathrm{B}}$ whose project
network is shown in \autoref{fig:illustrative_case}. The project consists of eight non-dummy activities between dummy activity $0$ and dummy activity $9$, where the nominal durations $d_i$ are indicated above
each activity node $i$. Activities $1$, $2$, and $3$ require regular supervision by the same team of experts and therefore form delay group $V_1=\{1,2,3\}$. Activities $4$, $5$, and $7$ process components sourced from a common external supplier and constitute delay group $V_2=\{4,5,7\}$. As shown in the right-hand table of \autoref{fig:illustrative_case}, we apply a budgeted uncertainty set to each delay group: if the expert team for $V_1$ becomes understaffed or the supplier for $V_2$ falls behind on deliveries, all activities in the corresponding group may be slowed down, so we set $\ell_1 = \ell_2 = 3$. Finally, activities $6$ and $8$ install two different components sourced from the same supplier.  A shortage at this supplier typically affects one component line at a time, so at most one activity in $V_3 = \{6,8\}$ is delayed ($\ell_3 = 1$). The delay increments $\Delta_i$ are shown as red numbers above the  nodes. The interdiction budget is set to $k=2$, meaning that at most two delay groups can be attacked simultaneously.
\begin{figure}[htbp]
\centering
\begin{minipage}{0.5\textwidth}
  \centering
  \begin{tikzpicture}[
    thick, node distance=15mm,
    cir/.style={draw, circle, text width=3.2mm, align=center}
  ]
    \node[cir, fill=gray!40] (0) {$0$};
    \node[cir, right=2em of 0, fill=blue!20,  label=above:{$5$ (\textcolor{red}{2})}] (2) {$2$};
    \node[cir, above right=of 0, fill=blue!20, label=above:{$4$ (\textcolor{red}{3})}] (1) {$1$};
    \node[cir, below right=of 0, fill=blue!20, label=above:{$2$ (\textcolor{red}{2})}] (3) {$3$};

    \node[cir, right=3.8em of 1, fill=yellow!40, label=above:{$2$ (\textcolor{red}{1})}] (7) {$7$};
    \node[cir, right=2em of 2, fill=yellow!40, label=above:{$4$ (\textcolor{red}{2})}] (4) {$4$};
    \node[cir, right=2em of 4, fill=green!30,  label=above:{$4$ (\textcolor{red}{2})}] (6) {$6$};
    \node[cir, right=2em of 6, fill=gray!40] (9) {$9$};

    \node[cir, right=2em of 3, fill=yellow!40, label=above:{$5$ (\textcolor{red}{4})}] (5) {$5$};
    \node[cir, right=2em of 5, fill=green!30,  label=above:{$4$ (\textcolor{red}{3})}] (8) {$8$};

    \draw[->] (0) edge (1) (0) edge (2) (0) edge (3);
    \draw[->] (1) edge (4) (2) edge (4)  (3) edge (5) (1) edge (7)
              (4) edge (6) (5) edge (8) (7) edge (9) (6) edge (9) (8) edge (9);
  \end{tikzpicture}
\end{minipage}%
\hfill
\begin{minipage}{0.5\textwidth}
  \small
  \begin{tabular}{llll}
    \toprule
    Group $V_r$& duration $d$ & delay $\Delta$  &$\ell_r$ \\
    \midrule
   $\{1, 2, 3\}$&$(4,5,2)$&$(3,2,2)$ & $ 3$ \\
   $\{4,5,7\}$&$(4,5,2)$ &$(2,4,1)$ &  $3$ \\
   $\{6,8\}$& $(4,4)$ & $(2,3)$ &  $1$ \\
    \bottomrule
  \end{tabular}

  \vspace{0.6em}

  \begin{tabular}{ll}
  Interdiction budget $k=2$ \\
  \end{tabular}
\end{minipage}
\caption{Project network (left) and delay groups
$V_r$, duration vector $d$, delay vector $\Delta$, and interdiction budget $k$ (right).}
\label{fig:illustrative_case}
\end{figure}

In this setting, \textsc{PSIP-DG}$_{\mathrm{B}}$ evaluates which delay groups and activities, if disrupted, would cause the largest increase in the project makespan. Under nominal durations the project makespan is $13$, while an optimal interdiction policy selects groups $V_2$ and $V_3$, delaying all activities in $V_2$ and activity~8 and leading to a worst-case project makespan of $18$ time units, determined by the path $0$--$3$--$5$--$8$--$9$. From a managerial perspective, such a solution simultaneously evaluates the worst-case completion time and identifies where robustness efforts should be invested: for instance, by hiring additional expert capacity or designing backup procedures for activities in $V_2$, and prioritizing activity~8 in $V_3$ for extra buffer, backup resources, or closer monitoring. In this way, the \textsc{PSIP-DG}$_{\mathrm{B}}$ solution can be directly translated into targeted robustness actions on the most vulnerable parts of the project. Interestingly, attacking any single group in isolation yields the same makespan of $15$, yet different combinations of groups lead to different worst-case makespans. The impact of an interdiction policy thus depends on how the selected groups interact through the network, which underlies the computational difficulty established in Section~3.

\section{Complexity results for \textsc{PSIP-DG}}
\label{sec:complex}
This section focuses on the computational complexity of the \textsc{PSIP-DG} under different types of uncertainty sets and network structures. We first present several tractable cases where the \textsc{PSIP-DG}  can be solved efficiently, given a specific structure of  delay groups. Then we show that calculating the optimal policy for each separate delay group is already $N\!P$-hard under polyhedral uncertainty. Furthermore, we prove that the \textsc{PSIP-DG} is  $N\!P$-hard in the two special cases within the budgeted uncertainty set and present an inapproximability result.

\subsection{Tractable cases}
\subsubsection{Singleton delay group}
\label{sec:psip}
We begin with the basic case that each activity constitutes its own group, ignoring the``group" concept. For any $ j\in V $, adding a delay $\Delta_j$ to the duration of activity $ j $ consumes one unit of interdiction budget. The total interdiction budget is $ k $, which means the interdictor can only increase the processing time of $ k $ activities with the budget constraint $\sum_{i\in V} z_i \leq k$. This so-called project scheduling interdiction problem (PSIP) can be solved through dynamic-programming recursion in $\mathcal{O}(k|E|)$ time, where $|E|$ is the number of edges in the project network \citep{brown2005complexity}. There are some variants of PSIP:
\begin{itemize}
	\item Non-uniform interdiction cost: Suppose the interdiction cost for activity $i$ is $k_i$, and the constraint for $x$ is rewritten as $\sum_{i\in V} k_iz_i \leq k$. The $N\!P$-hardness of such a problem can be shown by reduction from the knapsack problem \citep{brown2005complexity}. 
	
	\item Continuous interdiction: The interdiction decisions $z$ are allowed to be continuous variables and the budget $k$ is given as a fractional number.	Based on the following two observations, the linearization technique used in \citet{lim2007algorithms} can be applied to solve continuous PSIP in polynomial time.
	\begin{observation}
		There exists an optimal solution $ (x^*, s^*, z^*) $ to PSIP such that $  \sum\limits_{i\in V} z^*_{i}=k$ \citep{sherali1980finitely}.	
	\end{observation}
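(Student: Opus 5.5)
We need to prove that for continuous PSIP, where $z_i\in[0,1]$ and $\sum_i z_i\le k$ with $k$ possibly fractional, some optimal solution has $\sum_i z^*_i = k$. We assume $k\le |V|$; otherwise the budget can never bind and the statement is read with $k$ replaced by $\min\{k,|V|\}$.

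The argument is a monotonicity (exchange) argument in the delay variables. The objective is the longest-path length under durations $\tilde d_i = d_i + \Delta_i z_i$, with $\Delta_i \ge 0$. Fixing the critical path $P$, the makespan is
\[
F(z)=\max_{P}\sum_{i\in V(P)} (d_i+\Delta_i z_i),
\]
a maximum of affine functions that are each nondecreasing in every coordinate $z_i$. Hence $F$ is coordinatewise nondecreasing on $[0,1]^V$. The inner minimization over $s$ simply returns $F(z)$, since $s$ ranges over the feasible potentials and the minimal $s_{n+1}$ equals the longest path length.

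First, optimal solutions exist. The feasible set $\{z\in[0,1]^V:\sum_i z_i\le k\}$ is compact and $F$ is continuous (piecewise linear), so a maximizer $z^*$ exists; take $s^*$ to be the earliest-start schedule for $\tilde d$.

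Second, suppose $\sum_i z^*_i<k$. Since $k\le|V|$, some coordinate has $z^*_j<1$. Increase such coordinates one at a time, each by at most $1-z^*_j$, until the total reaches exactly $k$; this is possible by continuity because the sum can climb all the way to $|V|\ge k$. Call the result $z'$. It is feasible, satisfies $z'\ge z^*$ componentwise, and so $F(z')\ge F(z^*)$ by monotonicity, which makes $z'$ optimal as well. Recomputing the earliest-start times $s'$ for the new durations gives the optimal triple with $\sum_i z'_i=k$.

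The only delicate point is the edge case $k>|V|$, where equality is impossible, together with making precise that the min over $s$ equals the longest path. The latter follows from standard LP duality for longest paths in a DAG, which is exactly the dual formulation given above. Everything else is immediate from nonnegativity of $\Delta_i$.
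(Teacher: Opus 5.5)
Your proof is correct. The paper gives no argument of its own for this observation; it only cites the bilinear-programming literature. In that view, continuous PSIP is a bilinear program in the path variables $y$ and the interdiction variables $z$, and optimal solutions are sought among extreme points of the feasible polytopes. You argue directly instead: $F(z)=\max_P\sum_{i\in V(P)}(d_i+\Delta_i z_i)$ is coordinatewise nondecreasing because $\Delta_i\ge 0$, so an optimal $z^*$ with slack budget can be raised to a tight one without loss. This is more elementary and self-contained. It also makes explicit the hidden assumption $k\le |V|$, without which the equality $\sum_i z^*_i=k$ cannot hold, and it applies verbatim to binary $z$ with integral $k$. What the extreme-point viewpoint adds, and what the paper uses in the next observation for the linearization, is that an optimal $z$ can be taken to be a vertex with at most one fractional coordinate. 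You can recover this from your own setup at no cost. $F$ is a maximum of affine functions and hence convex. Once an optimum is known to lie in the face $\{z:\sum_i z_i=k,\ 0\le z\le 1\}$, some vertex of that face is therefore also optimal.
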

	\begin{observation}
		Consider the polyhedral set $ Z=\{z: \sum\limits_{i\in V} z_{i}=k, 0\leq z_i\leq 1\}.$ Then, for each extreme point $z$ of $ Z$, there exists a single basic variable $ Z_r $ such that $z_r\in[0, 1] $, while all other variables are nonbasic at their lower bounds of 0 or upper bounds of 1 \citep{bertsimas1997introduction}. 
	\end{observation}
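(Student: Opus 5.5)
We prove the statement about the polytope $Z=\{z\in\mathbb{R}^V:\ \sum_{i\in V} z_i=k,\ 0\le z_i\le 1\}$ by the standard characterization of extreme points as basic feasible solutions. Introduce slack variables $w_i$ with $z_i+w_i=1$, so that $Z$ is written in standard form with equality constraints
\[
\sum_{i\in V} z_i = k,\qquad z_i+w_i=1 \quad (i\in V),\qquad z,w\ge 0.
\]
This system has $|V|+1$ linearly independent equality rows and $2|V|$ variables. By the correspondence between extreme points and basic feasible solutions \citep{bertsimas1997introduction}, every extreme point has at most $|V|+1$ variables that can be nonzero (the basic ones), and all nonbasic variables equal $0$.

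The key step is a counting argument. For each $i$, the pair $(z_i,w_i)$ must contain at least one basic variable: if both were nonbasic they would both be $0$, contradicting $z_i+w_i=1$. This already accounts for $|V|$ basic variables, one per index $i$. Hence at most one further basic variable remains, so at most one index $r$ has both $z_r$ and $w_r$ basic; call it the fractional index. For every other index $i\neq r$, exactly one of $z_i,w_i$ is basic and the other is nonbasic at $0$. This gives $z_i\in\{0,1\}$ for all $i\ne r$: $z_i$ is nonbasic at its lower bound $0$, or $w_i=0$, i.e.\ $z_i$ is at its upper bound $1$. The remaining variable $z_r$ is then determined by $z_r=k-\sum_{i\ne r} z_i$ and lies in $[0,1]$ by feasibility.

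Alternatively, and more directly, one can argue without bases. Suppose an extreme point $z$ had two fractional coordinates $z_p,z_q\in(0,1)$. Then $z\pm\varepsilon(e_p-e_q)$ both stay in $Z$ for small $\varepsilon>0$, since they preserve the sum and the bounds. So $z$ is the midpoint of two distinct points of $Z$, contradicting extremality. Therefore at most one coordinate is fractional, which is exactly the claim.

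The only subtle point is the degenerate case, where the basic variable $z_r$ itself happens to be $0$ or $1$, e.g.\ when $k$ is an integer. The statement only asserts $z_r\in[0,1]$, so this case is covered, and we should avoid claiming strict fractionality. I would present the perturbation argument as the main proof because it sidesteps degeneracy entirely. The basis-counting view serves as the link to the linearization of \citet{lim2007algorithms}: it shows that one may enumerate the index $r$ together with integral choices for the other coordinates.
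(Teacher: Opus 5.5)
Your proof is correct. The paper itself gives no argument and only cites Bertsimas--Tsitsiklis for the bounded-variable form of basic solutions. There, $Z$ has a single general constraint $\mathbf{1}^\top z=k$ of rank one, so a basis consists of exactly one column and every other variable is nonbasic at one of its bounds, which is literally the statement.

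Your slack-variable counting is the same fact written out in standard form. In fact the count gives \emph{exactly} one index $r$ with both $z_r,w_r$ basic, since a basis has exactly $|V|+1$ members. So $r$ is always well defined and you need not say ``at most one.''

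Your perturbation argument is a genuinely different, basis-free route. It is self-contained and does not depend on how degeneracy is handled. However, it delivers ``at most one coordinate lies in $(0,1)$'' rather than the basic/nonbasic phrasing of the observation, so add the one-line translation back. Take $r$ to be the fractional index, or any index if $z$ is integral. The basis matrix is then the $1\times 1$ matrix $[1]$, which is nonsingular for every choice of $r$. This gives a legitimate basis in which all other variables are nonbasic at $0$ or $1$.

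The perturbation route has a further advantage: it transfers directly to the weighted polytope $\{z:\sum_i k_i z_i\le k,\ 0\le z_i\le 1\}$ of the next observation, which is the version the \textsc{Partition} reduction actually invokes. A fractional coordinate with $k_i=0$, or any fractional coordinate when the budget is slack, can be perturbed on its own. Two fractional coordinates $p,q$ with $k_p,k_q>0$ can be moved along $k_q e_p-k_p e_q$.
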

    \begin{observation}\label{obs:extreme-point}
	There exists an optimal solution $ (x^*, z^*) $ to \ref{PSIPDG} such that $\sum\limits_{i\in V} z^*_{i}=k$. Given the polyhedral set $$Z=\{z: \sum\limits_{i\in V} k_iz_{i} \leq k, 0\leq z_i\leq 1\},$$ for each extreme point $z$ of $ Z$, there exists a single basic variable $ z_r $ such that $z_r\in[0, 1] $, while all other variables are nonbasic and are either 0 or 1.
\end{observation}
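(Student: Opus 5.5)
The observation has two parts. The first is that some optimal solution of the continuous, non-uniform-cost PSIP spends the whole budget. The second is that every extreme point of $Z$ has at most one fractional coordinate. I would prove them in that order, reading the first as "$\sum_i k_i z^*_i = k$ whenever the budget is not slack at $z\equiv 1$". Otherwise $z\equiv 1$ is feasible and trivially optimal.

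\emph{Exhausting the budget.} For fixed $z$ the inner problem $\min_s s_{n+1}$ is a longest-path length $L(z)=\max_P \sum_{i\in V(P)} (d_i+\Delta_i z_i)$. This is a maximum of affine functions of $z$ with nonnegative coefficients, since $\Delta_i\ge 0$. Hence $L$ is convex and coordinatewise nondecreasing. Now take an optimal $z^*$ with $\sum_i k_i z^*_i<k$. If some $z^*_i<1$, raise that coordinate until either the budget constraint becomes tight or $z_i=1$. By monotonicity $L$ does not decrease, so the new point is still optimal. Repeat this at most $|V|$ times. The process ends with either $\sum_i k_i z_i = k$ or $z\equiv 1$.

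\emph{Structure of extreme points.} Write $Z$ as the intersection of the box $0\le z_i\le 1$ with one extra linear constraint $\sum_i k_i z_i\le k$. An extreme point $z$ of $Z\subseteq\mathbb{R}^{|V|}$ must have $|V|$ linearly independent active constraints. Only one active constraint can be a non-box constraint, so at least $|V|-1$ of the active constraints are bounds $z_i\in\{0,1\}$, and they are on distinct coordinates because they are independent. Therefore at most one coordinate $z_r$ is not fixed at a bound, and it lies in $[0,1]$. In basis language, this is the single basic variable of the standard-form system with one equality row (after adding a slack), matching the previous observation. Suppose instead two coordinates $z_p,z_q$ were strictly fractional. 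If $k_p,k_q>0$, perturb along $\pm(k_q e_p - k_p e_q)$; if one of them has $k_p=0$, perturb along $\pm e_p$. Either way $z$ is a midpoint of two feasible points, which contradicts extremality.

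\emph{Combining the two parts.} Since $L$ is convex, its maximum over the polytope $Z$ is attained at an extreme point. So an optimal solution can be chosen with at most one fractional coordinate, and by the first part we can also take it on the face $\sum_i k_i z_i = k$. That face is itself a face of $Z$, so its extreme points are extreme points of $Z$, and both properties hold simultaneously. The main obstacle I expect is the bookkeeping around degenerate cases: $k_i=0$, the budget being slack even at $z\equiv 1$, and making "basic variable" precise. I would handle these with the active-constraint argument above rather than simplex terminology.
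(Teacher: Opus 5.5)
Your argument is correct, and it is more complete than what the paper offers. The paper gives no proof of Observation~\ref{obs:extreme-point}. It states it right after the two observations citing Sherali (budget exhaustion) and Bertsimas--Tsitsiklis (with one constraint row besides the bounds, a basic solution has a single basic variable and all others at a bound), so its implicit route is ``LP basis theory plus citation.''

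You instead prove everything directly:
\begin{itemize}
\item monotonicity of $L$ in $z$ to exhaust the budget;
\item a rank count of active constraints for the vertex structure (your perturbation check is a correct but redundant second proof);
\item convexity of $L$ to conclude that the maximum is attained at a vertex of the tight face $\{z\in Z:\sum_i k_i z_i=k\}$, hence at a vertex of $Z$.
\end{itemize}
The paper genuinely needs that last step but never states it. The proof of Theorem~\ref{co_nu_psip} begins with ``an optimal extreme point solution,'' and the existence of such a solution is exactly your convexity argument.

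Your caveats are corrections of the statement, not just bookkeeping. Under non-uniform costs the first claim only makes sense as $\sum_i k_i z^*_i=k$, not the unweighted sum written there. It fails when $\sum_i k_i<k$. And the unique basic variable may be the slack, in which case no $z_r$ is basic and all coordinates sit at bounds. The citation route is shorter; yours is self-contained and pins down the exact version of the observation that the \textsc{Partition} reduction uses.
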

	
\end{itemize}
\subsubsection{Delay groups with specific structure}
\label{sec:tractable_cases}
Case 1: One tractable case is the one where each delay group forms an anti-chain. $F$ is termed an \emph{anti-chain} if there are no transitive relations between $u$ and $v$ for all $u, v \in F$. Under the assumption that no delay groups are overlapping, we can discretize each delay group into a separate activity. For each activity $i\in V_j$, calculate the maximum delay { $\Delta_i=\max\{x_i:~ x_i\in Q_j\}$.} This process simplifies PSIP-DG to the setting of a singleton delay group. If each delay group forms an anti-chain, it is equivalent to solving PSIP with interdiction budget $ k $ and the delay for activities is represented as above. 

Case 2: Another interesting special case is the one where there are no precedence constraints among activities in different delay groups in \textsc{PSIP-DG}$_{\mathrm{B}}$. Since we consider an activity-on-node network, it means that there are no paths from any activity in one group to the activities in another. This condition implies that activities in different delay groups are located in different chains in the parallel decomposition of the maximal chains in $G$. As a result, the disruptions to different delay groups are independent of each other. 

We denote the optimal profit of consuming one unit of budget to the group $ V_r $ as $ W_r $. Given the budgeted uncertainty in $V_r$, the optimal value can be obtained by solving the following optimization problem:

\begin{alignat*}{2}
	&\max\limits_{y,z}&&\sum_{e=(i,j)\in E}y_e\left(d_i+\Delta_iz_{i} \right)\\
	\lplabel[dplr]{$\mathrm{MAX}(\ell, r)$}	&\mbox{s.t.}& \quad &\sum_{i\in V: e=(i,j)}\begin{aligned}[t]y_e-\sum_{i\in V: e=(j, i)}y_e&=\begin{cases} 
			-1& \text{if } j=0 \\
			1 & \text{if } j=n+1\\
			0 & \text{if } j\in V\setminus\{0,n+1\}
		\end{cases}&{}&   \\
		\sum_{i\in V_r} z_{i}&\leq \ell_r &{}&   \\
		z_{i}&\in\{0,1\}  &{}&  \forall~ i\in V_r\\
        z_i &= 0 &{}&\forall\, i\notin V_r.
	\end{aligned}
\end{alignat*}

A feasible solution $ \{y_e\}_{e\in E} $ corresponds to an $0$-$(n+1)$-path in the PERT network.  Thus, \ref{dplr} maximizes the longest source-sink path length in an acyclic digraph within a budget, solvable in $\mathcal{O}(\ell_r|E|)$ time. Let $W$ denote the optimal objective value of \textsc{PSIP-DG}$_{\mathrm{B}}$ under such conditions. Since the disruptions to different delay groups are independent, we have $W= \max_{r\in [m]} W_r$, where $W_r$ is obtained by solving the problem \ref{dplr} for the corresponding delay group $r\in [m]$. Therefore, \textsc{PSIP-DG}$_{\mathrm{B}}$ can be solved by solving \ref{dplr}  for each $r\in [m]$ independently and taking the maximum of the optimal values. 

\subsection{\texorpdfstring{$N\!P$-hardness of \textsc{PSIP-DG}$_{\mathrm{P}}$}{NP-hardness of PSIP-DG-P}}
To study the complexity of \textsc{PSIP-DG}$_{\mathrm{P}}$, we first consider the continuous interdiction case with non-uniform cost of PSIP. In this case, the interdiction strategy can be modeled by the polyhedral set $\{z: \sum\limits_{i\in V} k_iz_{i}\leq k, 0\leq z_i\leq 1\}$. Interestingly, PSIP becomes $N\!P$-hard, despite the continuous nature of the interdiction variables. This result is somewhat counterintuitive, as the variables are not binary, and the reduction from the knapsack problem is not straightforward. In the following, we present a proof of this hardness result. 
\begin{theorem}\label{co_nu_psip}
	Continuous PSIP with non-uniform interdiction costs is $N\!P$-hard.
\end{theorem}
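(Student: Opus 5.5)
We reduce from \textsc{Subset Sum}: given positive integers $w_1,\dots,w_p$ and a target $B$, decide whether some $J\subseteq[p]$ has $\sum_{j\in J}w_j=B$. The idea is that choosing the longest path will encode the choice of a subset. On a fixed path, the continuous interdiction problem is only a fractional knapsack, so it causes no difficulty. The whole difficulty comes from selecting the path. We therefore want nominal durations that penalise paths whose chosen subset sums to strictly less than $B$, and also paths whose subset sums to more than $B$.

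\emph{Construction.} Build a series-parallel network consisting of $p$ ``diamonds'' in series between $0$ and $n+1$. Diamond $j$ offers two parallel activities:
\begin{itemize}
\item an \emph{item node} $a_j$ with $d_{a_j}=0$, delay $\Delta_{a_j}=2w_j$ and interdiction cost $k_{a_j}=w_j$;
\item a \emph{bypass node} $b_j$ with $d_{b_j}=w_j$ and either $\Delta_{b_j}=0$ or a prohibitive cost.
\end{itemize}
Consecutive diamonds are joined through zero-duration connector nodes, which are also made uninterdictable. The budget is $k=B$. Every $0$--$(n+1)$ path corresponds to a subset $J$, namely the set of $j$ for which the path passes through $a_j$.

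\emph{Key step: the value of a fixed path.} We use the exchange $\max_z\max_P=\max_P\max_z$, which holds since both are maxima. For a fixed path $P_J$, the best continuous $z$ solves a fractional knapsack over the item nodes on the path. Each such item has value $2w_j$ and cost $w_j$, so all items have the same value-to-cost ratio $2$, and the optimum is $2\min\{B,S\}$ with $S=\sum_{j\in J}w_j$. Interdicting nodes off $P_J$ is useless. Writing $W=\sum_j w_j$, the path value is
\[
f(J)=(W-S)+2\min\{B,S\}.
\]
If $S\le B$, then $f(J)=W+S\le W+B$. If $S>B$, then $f(J)=W-S+2B<W+B$. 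Hence the optimal makespan of the continuous PSIP instance is at least $W+B$ if and only if some $J$ has $S=B$, which is exactly a yes-instance of \textsc{Subset Sum}. The construction is clearly polynomial, which gives $N\!P$-hardness.

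\emph{Main obstacle.} The delicate point is making sure continuity of $z$ cannot be exploited. One possible exploit is spreading budget fractionally over nodes on several paths; this is harmless, because the makespan is the longest \emph{single} path, and only nodes on that path contribute. The other is fractional items on the chosen path; these are neutralised because all items share the same ratio $\Delta/k_i$, so any fractional allocation on a path gives exactly $2\min\{B,S\}$. In the write-up I would state this exchange argument carefully, using that for fixed $P$ the inner problem is a linear program over $\{z:\sum_i k_iz_i\le k,\ 0\le z_i\le1\}$ (compare Observation~\ref{obs:extreme-point}). Also, I would note that the bypass nominal durations are calibrated with factor $1$ against delay factor $2$, so that undershooting $B$ is strictly penalised.
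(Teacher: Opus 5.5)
Your proposal is correct and is essentially the paper's reduction: the same series of diamonds, each with a zero-duration item node (delay $2w_j$, cost $w_j$) in parallel with a bypass of duration $w_j$, budget $B$ and threshold $W+B$; the paper reduces from \textsc{Partition}, i.e.\ $B=W/2$, with threshold $3b$. The only difference is the analysis: you swap the two maxima and solve a uniform-ratio fractional knapsack on each path, whereas the paper writes the makespan as $2b+\sum_{x_i>0.5}(2c_ix_i-c_i)$ and uses an extreme-point argument with a single fractional coordinate, so your version has the small advantage of stating the upper bound $f(J)\le W+B$ explicitly.
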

\begin{proof}
	We will now show the  hardness result using a reduction from the problem \textsc{Partition}. An instance of \textsc{Partition}  problem is given by natural numbers $c_1,c_2,\dots, c_n$ with $\sum c_j =2b $. The question is to determine whether there exists a set $ S\subseteq [n]$ with value $\sum_{j \in S} c_j = b$. The \textsc{Partition} problem is known to be $N\!P$-hard \citep{garey1979computers}. 
	
	We now construct an instance of PSIP with continuous interdiction and non-uniform interdiction cost. For each number $c_j$ with $j \in [n]$, we introduce a corresponding node $v_j$. The activity nodes can be written as $V=\{0\}\cup[n+1]\cup\{v_{i1}, v_{i2}:\; i\in[n]\}$. The default duration for each activity is $0$. The precedence relationships among activities are depicted in \autoref{fig:reduction_2partition}. There are two parallel paths connecting node $j-1 $ and node $j$, which are joined by $v_{j1}$ and $v_{j2}$, respectively. For any $j\in[n]$, the original duration of  activity $v_{j2} $ is $c_j$ and of $v_{j1} $ is 0. The possible delay of each activity $i$ is defined as $\Delta_i$, where $\Delta_{v_{i2}} = 0$ and $\Delta_{v_{i1}} = 2c_i$. 
    We further let $k_{v_{j1}} = c_j$ and $k_{v_{j2}} = 0$ for $j \in [n]$, and set $k = b$.
    That is, a solution to the constructed PSIP instance is given by a vector $x \in [0, 1]^V$ with $\sum_{i \in V} x_i \leq b$.
    In the following, to simplify notation, we will represent solutions to this PSIP instance by a vector $x \in [0, 1]^{n}$ instead of $[0, 1]^V$, where $x_i$ denotes the degree of delay for activity $v_{i1}$ (note that these activities are the only ones that can be delayed).
	\begin{figure}[h] 
		\centering 
		\begin{tikzpicture}[->,  
			every transition/.style={draw=black},
			every place/.style={draw=black, inner sep=0}
			]
			\node[place] (s0) {$0$};
			\node[place, above right=1.5em and 3em of s0] (s11) {$v_{11}$};
			\node[place, below right =1.5em and 3em  of s0] (s12) {$v_{12}$};
			\node[place, right =8em of s0] (s1) {$1$};
			\node[place, above right=1.5em and 3em of s1] (s21) {$v_{21}$};
			\node[place, below right=1.5em and 3em of s1] (s22) {$v_{22}$};
			\node[place, right =8em of s1] (s2) {$2$};
			\node[place, right =5em of s2] (s3) {$n$};
			\node[place, above right=1.5em and 3em of s3] (s31) {$v_{n1}$};
			\node[place, below right=1.5em and 3em of s3] (s32) {$v_{n2}$};
			\node[place, right =8em of s3] (s4) {$n+1$};
			\path (s2) --++ (2,0) node[midway,scale=1.5] {$\cdots$};
			\draw 
			(s0) edge (s11)
			(s0) edge  (s12)
			(s11) edge node[pos=0.5, fill=white]{$2c_1x_1$} (s1)
			(s12) edge node[pos=0.5, fill=white]{$c_1$} (s1)
			(s1) edge (s21)
			(s1) edge (s22)
			(s21) edge node[pos=0.5, fill=white]{$2c_2x_2$} (s2)
			(s22) edge node[pos=0.5, fill=white]{$c_2$} (s2)
			(s3) edge (s31)
			(s3) edge  (s32)
			(s31) edge node[pos=0.5, fill=white]{$2c_nx_n$} (s4)
			(s32) edge node[pos=0.5, fill=white]{$c_n$} (s4);
			;
		\end{tikzpicture}
		\caption{{Construction of the reduction from \textsc{Partition} to PSIP}\label{fig:reduction_2partition}}
	\end{figure}

    Let $\ell^*(x)$ denote the length of the longest path in the constructed instance under the solution $x \in [0, 1]^n$. Each $0$-$(n+1)$-path contains either $v_{i1}$ or $v_{i2}$, hence
	\begin{align}
	    \ell^*(x)=\max\{2c_1x_1, c_1\}+\cdots+\max\{2c_nx_n, c_n\}=2b+\sum_{i\in [n]:x_i>0.5}2c_ix_i-c_i.
        \label{eq:opt-value-reduction}
	\end{align}
	Next, we show that if the optimal solution value of the PSIP instance is $3b$, then there also is a feasible solution to the \textsc{Partition} instance. 
    Consider an optimal extreme point solution $x^*$ to the PSIP instance with value $\ell^*(x^*) = 3b$.
	By Observation~\ref{obs:extreme-point}, there exists an  $r \in [n]$ such that $x^*_i \in \{0, 1\}$ for all $i \in [n] \setminus \{r\}$. 
    Let $S^*=\{i \in [n] \setminus \{r\}:~x^*_i=1\}$.
    If $x^*_r \leq 0.5$, then from \eqref{eq:opt-value-reduction} we obtain $3b = \ell^*(x^*) = 2b + \sum_{i \in S^*} c_i$ and hence $S^*$ is a feasible solution to the \textsc{Partition} instance.
    If $x^*_r > 0.5$, then from \eqref{eq:opt-value-reduction} we obtain
    $$3b = \ell^*(x^*)=2b+\sum_{i\in [n]:x^*_i>0.5}2c_ix^*_i-c_i = 2b+\sum_{i\in S^*}c_i + c_r(2x^*_r-1) \leq 3b,$$
    which is only possible if $x^*_r = 1$.
    Thus $S^* \cup\{r\}$ is a feasible solution to the \textsc{Partition} instance in this case.
    
	Conversely, if there exists a set $ S $ such that $\sum_{j \in S} c_j=b$, let $x^S$ denote the corresponding solution defined by $x^S_i = 1$ if $i\in S$ and $x=0$ otherwise. Note that $x^S$ satisfies the constraint  $\sum_{i\in[n]} c_ix_i\leq b$ and $\ell^*(x^S)=2b+\sum_{i\in S}2c_ix_i-c_i=3b$.
    This shows that the answer to the given instance of the \textsc{Partition} problem is yes if and only if the maximum longest-path length in our constructed instance is at least $3b$. 
\end{proof}

\begin{remark}
The construction in the proof of Theorem~\ref{co_nu_psip} admits
an integral optimal solution; hence the hardness also holds when the interdiction decisions are required to be binary.
\end{remark}

Following the proof of Theorem~\ref{co_nu_psip}, the continuous interdiction policy is equivalent to the solution to problem~\ref{PSIPDG}, where the uncertainty set is defined as follows:  $$Q_r=\{(\Delta_1,\Delta_2,\dots,\Delta_n):~\sum_{i=1}^n \Delta_i \leq 2b, ~\Delta_i\in [0,2c_i] \text{ for any } i\in [n]\}.$$ Hence, Theorem~\ref{co_nu_psip} directly implies the $N\!P$-hardness of PSIP-DG$_{\mathrm{P}}$.
\begin{corollary}
	PSIP-DG$_{\mathrm{P}}$ is $N\!P$-hard even when there is only one delay group.
\end{corollary}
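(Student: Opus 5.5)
We would prove the corollary by reusing the instance from the proof of Theorem~\ref{co_nu_psip} essentially unchanged, reinterpreting the continuous, cost-weighted interdiction vector as a delay vector from a single polyhedral uncertainty set. Given a \textsc{Partition} instance $c_1,\dots,c_n$ with $\sum_j c_j=2b$, we take the same network of \autoref{fig:reduction_2partition} with the same nominal durations ($d_{v_{j2}}=c_j$, all others $0$). We put every activity into one delay group $V_1=V$ and set $k=1$. As the uncertainty set we take the polyhedron
\[
Q_1=\Bigl\{x\in\mathbb{R}^{V}_{+}:\ \textstyle\sum_{j\in[n]} x_{v_{j1}}\le 2b,\ x_{v_{j1}}\le 2c_j\ \text{for } j\in[n],\ x_i=0\ \text{for } i\notin\{v_{j1}:j\in[n]\}\Bigr\},
\]
which is of the form $\{x\ge 0: A_1x\le b_1\}$ (equalities written as pairs of inequalities) and has polynomial encoding size.

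The key step is a change of variables between $Q_1$ and the feasible set of continuous PSIP with costs $k_{v_{j1}}=c_j$ and budget $b$. We map $x_{v_{j1}}=2c_j z_j$. Then $z\in[0,1]^n$ with $\sum_j c_j z_j\le b$ holds if and only if $x\in Q_1$, and the prolonged duration of $v_{j1}$ is exactly $2c_jz_j$ in both models. Attacking the single group is never worse than not attacking, since $0\in Q_1$. Hence the optimal makespan of the \textsc{PSIP-DG}$_{\mathrm{P}}$ instance equals
\[
\max_{z}\ \ell^*(z)=2b+\sum_{i:\,z_i>1/2}(2c_iz_i-c_i).
\]
By the proof of Theorem~\ref{co_nu_psip}, this quantity is at least $3b$ if and only if the \textsc{Partition} instance is a yes-instance. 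We therefore conclude by invoking that equivalence directly, rather than reproving it.

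The only point needing care, and the one we expect to be the main obstacle, is making sure the reduction lands in the allowed model. Three things must be checked. First, the coordinates of $Q_1$ outside the delayable nodes must be forced to $0$ inside a polyhedron of the required form, which the constraints $x_i\le 0$ do. Second, the bijection between the two feasible sets must be verified carefully, including at the bound $x_{v_{j1}}=2c_j$. Third, the construction must be polynomial in the input size. Once these are checked, \textsc{Partition} reduces to deciding whether the \textsc{PSIP-DG}$_{\mathrm{P}}$ optimum with $m=1$ is at least $3b$. This gives the claimed $N\!P$-hardness even with a single delay group.
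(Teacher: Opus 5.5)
Your proposal is correct and is the paper's argument: the paper likewise reuses the \textsc{Partition} instance from Theorem~\ref{co_nu_psip} and, via the substitution $\Delta_i = 2c_i x_i$, reads the weighted continuous budget as the single polyhedral set $\{\sum_i \Delta_i \le 2b,\ 0 \le \Delta_i \le 2c_i\}$. Your explicit choices of $V_1 = V$ with the non-delayable coordinates forced to $0$, $k = 1$, and the remark that $0 \in Q_1$ simply make precise details that the paper leaves implicit.
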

\subsection{\texorpdfstring{$N\!P$-hardness of \textsc{PSIP-DG}$_{\mathrm{B}}$}{NP-hardness of PSIP-DG-B}}

In this subsection, we prove the $N\!P$-hardness of \textsc{PSIP-DG}$_{\mathrm{B}, \infty}$ and \textsc{PSIP-DG}$_{\mathrm{B}, 1}$, respectively. Moreover, the reduction to \textsc{PSIP-DG}$_{\mathrm{B}, \infty}$ implies an inapproximability result.
	
	\begin{theorem}
		\textsc{PSIP-DG}$_{\mathrm{B}, \infty}$ is NP-complete.
	\end{theorem}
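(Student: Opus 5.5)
We show membership in NP and then NP-hardness by a reduction from \textsc{Max $k$-cover}, as announced in the introduction. We use the decision version: given a threshold $T$, is there a set $S\subseteq[m]$ with $|S|\le k$ whose attack makes the makespan at least $T$?

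\emph{Membership.} Since $\ell_r\ge|V_r|$, the interdictor should delay every activity of each attacked group, so $S$ alone is a certificate. Given $S$, we set $\tilde d_i=d_i+\Delta_i$ for $i\in\bigcup_{r\in S}V_r$ and $\tilde d_i=d_i$ otherwise. We then compute a longest $0$-$(n+1)$-path in the DAG in $\mathcal{O}(|E|)$ time and compare it with $T$.

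\emph{Hardness: construction.} Start from a \textsc{Max $k$-cover} instance with universe $U=\{e_1,\dots,e_p\}$, family $\mathcal{F}=\{F_1,\dots,F_m\}$, budget $k$ and target $t$ (is there a subfamily of $k$ sets covering at least $t$ elements?).
\begin{itemize}
\item Build a chain $0\to w_1\to w_2\to\cdots\to w_p\to n+1$.
\item Replace each $w_j$ by a gadget with parallel branches: for every set $F_r\ni e_j$ add a node $a_{j,r}$ between consecutive junctions, with nominal duration $0$ and delay $\Delta_{a_{j,r}}=1$.
\item Add one extra branch with duration $0$ and no delay, so that every element has a default path.
\item Let the dummy junction nodes have duration $0$ and lie in a group that is never worth attacking, or give them singleton groups with $\Delta=0$.
\item Define the delay groups $V_r=\{a_{j,r}: e_j\in F_r\}$ for $r\in[m]$, with $\ell_r=|V_r|$.
\end{itemize}
Every $0$-$(n+1)$-path picks one branch per element gadget. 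Hence, after attacking $S$, the longest path has length exactly the number of elements $e_j$ for which some $F_r\ni e_j$ has $r\in S$, i.e.\ $|\bigcup_{r\in S}F_r|$. The construction is polynomial, and the optimum of \textsc{PSIP-DG}$_{\mathrm{B},\infty}$ equals the optimum of \textsc{Max $k$-cover}. The decision versions therefore coincide with $T=t$. The groups are disjoint because each node $a_{j,r}$ belongs only to $V_r$, so the partition assumption of Section~\ref{sec:definition} holds without duplication.

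\emph{Main obstacle.} The step needing care is verifying the objective identity. The longest path must be able to collect a delay in \emph{every} covered element gadget simultaneously, which holds because the gadgets are arranged in series and each is independently maximized by choosing an attacked branch if one exists. It must also never collect more than $1$ per element, which holds because the branches of a gadget are parallel. Since the correspondence preserves objective values exactly, the same reduction, combined with Feige's threshold for \textsc{Max $k$-cover}, should also give the $(1-1/\mathrm{e}+\epsilon)$ inapproximability claimed later.
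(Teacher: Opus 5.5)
Your proposal is correct and follows essentially the same route as the paper. Both reduce from \textsc{Max $k$-cover} to a layered network with one layer per element, one unit-delay node per (element, set) incidence and one delay group per set, so that the longest path counts covered elements; the paper uses $m$ copies per element with complete connections between consecutive layers instead of your junction nodes and default branch, which is immaterial. Your explicit NP-membership argument and the exact value identity $|\bigcup_{r\in S}F_r|$ for arbitrary thresholds $t$ are welcome additions. The paper leaves membership implicit and only argues the full-cover threshold $c=n$, whereas your value identity is what actually supports the later $(1-1/\mathrm{e}+\epsilon)$ inapproximability corollary.
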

	\begin{proof}
		Given an integer $c$, the decision problem consists of answering if there exists a solution to \ref{dpu} such that maximum delayed project makespan is at least $c$. We will now show that the \textsc{Max $ k $-cover}  problem can be polynomially reduced to this problem.
		
		An instance of \textsc{Max $ k $-cover} is given by a ground set $ U = \{u_1, u_2,\dots,u_n\} $, a collection of $m$
		subsets $S_r \subseteq U, r\in[m]$ of those elements, and an integer $ k $. The task is to select $ k $ subsets such that their union contains as many points as possible. The decision problem asks whether there exists a collection $ I\subseteq \{1,2,\dots,m\} $ such that $ |I|\leq k $ and $ \cup_{r\in I} S_r=U$. The \textsc{Max $ k $-cover} problem is $N\!P$-hard \citep{garey1979computers}. 
		
		Given such an instance of the  \textsc{Max $ k $-cover} problem, we will construct an instance of \textsc{PSIP-DG}$_{\mathrm{B}}$ with $\ell_r\geq |V_r|$. 
		For every element  $ u\in U$, we introduce $m$ parallel activity nodes, $A_u=\{u^1,u^2,\dots, u^m\}$. The activity nodes can be written as $V=\{0, n+1\}\cup\{A_u: u\in U\}$. The default duration for each activity is $0$. Activities in $A_{u_{i+1}}$ preceded activities in $ A_{u_{i}} $, for $ i=1,\dots, n-1$. We connect $u_i^r$ to every node in $A_{u_{i+1}}$, for any $ r \in [m]$.
		
		For each subset $S_r$, we introduce a corresponding delay group $ V_r=\{u_i^r: u_i\in S_r\}$.  Let $ \Delta_i=1$ for $u_i\in S_r, r\in [m]$.  Since $ \ell_r\geq|V_r|$, all the activities' durations in one delay group can be increased to $ 1 $. The interdiction budget is $ k $ and let $c = n$. The complete construction is depicted in \autoref{fig:reduction_MC}.
		\begin{figure}[h]
			\centering
			\begin{tikzpicture}
				\node[circle, draw=black, inner sep=0, minimum size=6mm] (0) at (0, 0) {$0$};
				\foreach \n in {1, ..., 3}{%
					\node[circle, draw=black, right=2em of \the\numexpr\n-1\relax, inner sep=0, minimum size=6mm] (\n) {$u_{\n}^2$};
				}
				\node[circle, draw=black, right=4em of 3, inner sep=0, minimum size=6mm] (4) {};
				\node[circle, draw=black, right=2em of 4, inner sep=0, minimum size=6mm] (5) {$u_{n}^2$};
				\node[circle, draw=black, right=2em of 5, inner sep=-0.5, minimum size=2mm] (6) {$n+1$};
				\draw[->] (0) edge (1) (1) edge (2)  (2) edge (3)  (4) edge (5)  (5) edge (6) ;
				\path (3) --++ (1.5,0) node[midway,scale=1.5] {$\cdots$};
				\foreach \n in {1, ..., 3}{%
					\node[circle, draw=black, above =2em of \n, inner sep=0, minimum size=6mm] (\n1) {$u_{\n}^1$};
					\node[circle, draw=black, below =3em of \n, inner sep=0, minimum size=6mm] (\n3) {$u_{\n}^m$};}
				\node[circle, draw=black, above =2em of 4, inner sep=0, minimum size=6mm] (41) {};
				\node[circle, draw=black, below =3em of 4, inner sep=0, minimum size=6mm] (43) {};
				\node[circle, draw=black, above =2em of 5, inner sep=0, minimum size=6mm] (51) {$u_{n}^1$};
				\node[circle, draw=black, below =3em of 5, inner sep=0, minimum size=6mm] (53) {$u_{n}^m$};
				\foreach \n in {1, ..., 5}{%
					\path (\n) --++ (0,-1) node[midway,scale=1.5] {$\vdots$};
				}
				\draw[->] (0) edge (11) (0) edge (13);
				\draw[->] (51) edge (6) (53) edge (6);
				\path (31) --++ (1.5,0) node[midway,scale=1.5] {$\cdots$};
				\path (33) --++ (1.5,0) node[midway,scale=1.5] {$\cdots$};
				\foreach \n in {1,2,4}{%
					\foreach \m in {1,3}{%
						\draw[->] (\n\m) -- (\the\numexpr\n+1\relax 1);
						\draw[->] (\n\m) -- (\the\numexpr\n+1\relax 3);
						\draw[->] (\n\m) -- (\the\numexpr\n+1);
					}
				}
				\foreach \n in {1,2,4}{%
					\draw[->] (\n) -- (\the\numexpr\n+1\relax 1);
					\draw[->] (\n) -- (\the\numexpr\n+1\relax 3);
				}
			\end{tikzpicture}
			\caption{Construction of the reduction from \textsc{Max $ k $-cover} to \textsc{PSIP-DG} \label{fig:reduction_MC}}
		\end{figure}
		
		This construction can be done in time that is polynomial in the size of the \textsc{Max $ k $-cover} instance. To prove that this answer is correct, we simply need to show that the original instance of \textsc{Max $ k $-cover} is a yes instance if and only if the created instance is also a yes instance. Suppose that the maximum delayed  makespan is $ n $. Let $ \mathcal{V} $ be the collection of activity groups delayed by the interdictor. By our construction, for each $ V\in \mathcal{V} $, it corresponds to a subset denoted by $S_V$. Let $I_\mathcal{V}=\{S_V: V\in \mathcal{V}\} $ be the corresponding collection of subsets, and it is trivial that $|I_\mathcal{V}|\leq k $.  Consider any element $ u_i $ of $ U $, it corresponds to activity set $A_{u_i}$. Since the delayed project makespan is $ n $, for any $i\in [n]$, there exists $V'\in \mathcal{V}  $ such that $\{u_i^r: r\in [m]\}\cap V' \neq \emptyset $. Therefore, $u_i\in S_{V'} $, which means $I_\mathcal{V} $ is a set cover of $U$.
		
		Now suppose that $U$ has a set cover $I$ of size $ k $. For each subset $S\in I$, let $V_S$ be the associated delay group and the delay groups selected by the interdictor are $ \mathcal{V}_I=\{V_S: S\in I\} $. We have $|\mathcal{V}_I|=|I|= k $. Furthermore, for any $i\in [n]$,  there exists $ S'\in I $ such that $ u_i\in S' $, and thus, $A_{u_i}\cap V_{S'} \neq \emptyset$. The interdictor can increase all the activities' durations to 1 within $V$, then $\sum\limits_{e=(i,j)\in E}y_e\left(\sum\limits_{V\in V_I: i\in V}1\right)=n.$ We conclude that this problem is NP-Complete.
	\end{proof}

\citet{feige1998threshold} showed that \textsc{Max} $ k $-cover cannot be approximated in polynomial time to within a factor of $(1-1/\mathrm{e}+\epsilon)$ unless P=NP, where $\mathrm{e}$ is Euler's number and $\epsilon$ is an arbitrarily small positive number. Our reduction further implies that, unless P = NP, there exists no polynomial-time approximation algorithm to approximate \textsc{PSIP-DG}$_{\mathrm{B}, \infty}$ within a factor smaller than $1-1/\mathrm{e}+\epsilon$.

\begin{corollary}
	It is $N\!P$-hard to approximate \textsc{PSIP-DG} within  $1-1/\mathrm{e}+\epsilon$.
\end{corollary}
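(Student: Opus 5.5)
The plan is to show that the reduction in the preceding theorem from \textsc{Max $k$-cover} to \textsc{PSIP-DG}$_{\mathrm{B},\infty}$ preserves objective values exactly. Feige's threshold of $1-1/\mathrm{e}+\epsilon$ then transfers directly. Since \textsc{PSIP-DG}$_{\mathrm{B},\infty}$ is a special case of \textsc{PSIP-DG}, the corollary follows for the general problem.

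\textbf{Step 1: the identity.} Fix a set $I\subseteq[m]$ of at most $k$ attacked groups. In this case it is optimal to delay every activity in each attacked group, since $\ell_r\ge|V_r|$ and the delays are nonnegative. I will show that the resulting makespan equals $|\bigcup_{r\in I}S_r|$.

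The network consists of $n$ consecutive layers $A_{u_1},\dots,A_{u_n}$. Each layer is completely joined to the next, so every $0$-$(n+1)$ path picks exactly one node $u_i^{r_i}$ from each layer, and every such choice of nodes forms a path. Nominal durations are $0$, so the path length is the number of chosen nodes that were delayed. A node $u_i^{r}$ has duration $1$ precisely when $r\in I$ and $u_i\in S_r$.

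Hence the longest path gains $1$ from layer $i$ if and only if $u_i$ is covered by $\{S_r\}_{r\in I}$, and $0$ otherwise. Summing over the layers gives the identity. Conversely, a \textsc{Max $k$-cover} solution $I$ gives an interdiction with the same value. Therefore $\mathrm{OPT}_{\textsc{PSIP-DG}}=\mathrm{OPT}_{\textsc{Max }k\text{-cover}}$.

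\textbf{Step 2: transfer of approximation.} Suppose there were a polynomial-time algorithm returning an interdiction $(z,u)$ of value at least $(1-1/\mathrm{e}+\epsilon)\mathrm{OPT}$. Take the attacked set $I=\{r:z_r=1\}$, which has $|I|\le k$. The value of this interdiction is at most the number of elements covered by $I$. This holds because a delayed node can only lie in a layer whose element belongs to some $S_r$ with $r\in I$. So $I$ covers at least $(1-1/\mathrm{e}+\epsilon)\mathrm{OPT}_{\textsc{Max }k\text{-cover}}$ elements. The construction and the extraction of $I$ both run in polynomial time, so this contradicts \citet{feige1998threshold} unless $P=N\!P$.

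\textbf{Main obstacle.} The only delicate point is the exactness of the identity in Step 1, in particular the fact that every layer-by-layer selection of nodes is realizable as a path. This is guaranteed by the complete connections between consecutive layers. I also need the dummy nodes $0$ and $n+1$ to contribute nothing, which holds because their durations are $0$ and they lie in no delay group.

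I will state the result as hardness of approximating within a factor $1-1/\mathrm{e}+\epsilon$ for a maximization problem. Feige's result is itself a gap statement, so the value-preserving reduction carries it over directly.
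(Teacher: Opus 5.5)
Your proposal is correct and follows the paper's route: the paper obtains the corollary directly from Feige's $(1-1/\mathrm{e}+\epsilon)$ threshold via the \textsc{Max $k$-cover} reduction of the preceding theorem, together with the observation that \textsc{PSIP-DG}$_{\mathrm{B},\infty}$ is a special case of \textsc{PSIP-DG}. Your Step~1 identity (makespan equals $|\bigcup_{r\in I}S_r|$) and your Step~2 extraction of a cover from an arbitrary interdiction spell out the exact value preservation that the corollary needs; the paper leaves this implicit, since its theorem proof only treats the full-cover threshold $c=n$.
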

Even when we fix $\ell_i=1$, \textsc{PSIP-DG}$_{\mathrm{B}}$ remains NP-complete. The proof is provided as following. 

\begin{theorem}
\textsc{PSIP-DG}$_{\mathrm{B}, 1}$ is NP-complete.
\end{theorem}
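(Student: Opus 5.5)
Membership in NP is immediate: a certificate consists of the attacked set $S$ with $|S|\le k$, one delayed activity $i_r\in V_r$ for each $r\in S$, and the resulting durations $\tilde d$. The makespan under $\tilde d$ is a longest-path computation in a DAG, so we can check in polynomial time whether it is at least $c$. For hardness I will reduce from 3-\textsc{SAT}, as announced in the introduction. The \textsc{Max $k$-cover} construction cannot be reused: with $\ell_r=1$ each group contributes at most one delayed activity to any path, so covering arguments based on many activities per group break down. Instead, the group structure should force a consistent truth assignment.

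Take a 3-\textsc{SAT} instance with variables $x_1,\dots,x_p$ and clauses $C_1,\dots,C_q$. Build a serial chain of clause gadgets from $0$ to $n+1$. Gadget $j$ consists of three parallel activities, one per literal occurrence in $C_j$, all with nominal duration $0$ and $\Delta=1$. It is entered from the previous gadget's junction node and left to the next one. For each variable $x_t$, form two delay groups: $V_t^{+}$ collects all activity nodes representing occurrences of the positive literal $x_t$, and $V_t^{-}$ collects those representing $\neg x_t$. Set $\ell_r=1$ for every group, budget $k=p$, and threshold $c=q$.

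This construction has a flaw, since a group with $\ell_r=1$ can delay only one node in total, so one literal could satisfy only one clause. I will fix this by inverting the roles. Each clause gadget becomes a delay group with $\ell=1$: attacking it lets the interdictor delay exactly one of its three literal nodes. The consistency requirement is then enforced by the network rather than by grouping. A path that passes through both $x_t$ in one clause gadget and $\neg x_t$ in another must never be profitable.

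So the plan is as follows. Use clause groups $V_j$ containing the three literal nodes of $C_j$, with $\ell_j=1$, $\Delta=1$, and $k=q$, so every clause group may be attacked. Add precedence arcs only between literal nodes of consecutive clauses that are not complementary, i.e.\ an arc from literal $a$ in $C_j$ to literal $b$ in $C_{j+1}$ unless $b=\neg a$. Consistency across nonconsecutive clauses is not yet enforced. To get it, I will replace the direct arcs with a variable-tracking layering: for each variable, add a "choice" gadget at the front consisting of two parallel long chains. A path can then pick up bonus duration only by committing to a value for each variable, and each literal node is made reachable only from the chain of the matching value. The threshold is set to the total committed bonus plus $q$.

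Correctness is then checked in both directions. Given a satisfying assignment, the interdictor delays one true literal per clause, and the consistent path collects $q$ units. Conversely, a makespan of at least $c$ forces the path to pass through a delayed literal node in every clause gadget, and the gadget wiring forces those literals to be consistent, which yields a satisfying assignment.

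The main obstacle is the consistency gadget. Delays on a single path are additive while the constraint is global, so the gadget must guarantee that no path can mix $x_t$ and $\neg x_t$ across distant clauses without losing more length than it gains. My first attempt will be to run, for each variable, two long parallel "rails" weighted heavily, say $M>q$ each, so that any path must stay on one rail per variable for the full length. Each literal node is spliced into the rail of its value, which keeps the construction acyclic and ensures that a path of length at least $pM+q$ must be assignment-consistent.
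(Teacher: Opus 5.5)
Your NP-membership argument is fine. The hardness part has a gap exactly where you place the main obstacle, and the mechanism you propose there cannot work. As long as the three literal nodes of each clause sit in parallel inside a serial chain of clause gadgets, every clause group $V_j$ is an antichain. Any path then meets each group at most once, so the delays collected along a path are additive, as you say. But that is the tractable Case~1 of Section~\ref{sec:tractable_cases}. Whatever choice gadgets, rails and nominal weights $M$ you add (with the extra nodes in singleton groups), the instance is just PSIP, solvable by dynamic programming; with $k=q$ and delays only on clause nodes it is even a plain node-weighted longest-path problem. So no wiring can enforce consistency across clauses unless $P=N\!P$. Concretely, if the choice gadgets sit in series at the front, every later node is reachable from both rails of every variable, so ``each literal node is reachable only from the chain of the matching value'' cannot be realized. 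Your converse direction (``the gadget wiring forces those literals to be consistent'') therefore has nothing behind it.

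Your last idea does work, once you drop the clause chain and let the hardness come from the cap $\ell_j=1$, i.e.\ from the \emph{failure} of additivity. Put $p$ variable gadgets in series. Gadget $t$ consists of two parallel chains: one through the nodes of all occurrences of $x_t$, one through those of $\neg x_t$. Keep the clause groups $V_j$ (the three occurrence nodes of $C_j$) with $\ell_j=1$ and $\Delta=1$, put all other nodes into zero-delay singleton groups, and take $d\equiv 0$, $k=q$, $c=q$. No weights $M$ are needed, since every $0$--$(n+1)$ path uses exactly one chain per variable and so reads off an assignment. A path may meet a clause group several times but collects at most one unit from it. Hence its best length is the number of clauses its assignment satisfies. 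A satisfying assignment gives makespan $q$. Conversely, a critical path of length $q$ must contain the delayed node of each of the $q$ clause groups, so its assignment satisfies every clause.

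This corrected route differs genuinely from the paper's proof. The paper keeps a serial chain of clause gadgets and makes each literal occurrence a chain of $m_q$ copies. Its groups encode \emph{conflicts}: for every pair of complementary occurrences, one copy of each goes into a common two-element group. With $\ell_r=1$ and $k=qm_q$, two complementary literal chains then cannot both be fully delayed. So in the paper the network enforces one choice per clause and the groups enforce consistency; in your corrected version the network enforces consistency and the groups enforce the disjunctions. The paper's route yields groups of size at most two. Yours yields a linear-size instance with $k=q$ whose optimal makespan equals the maximum number of simultaneously satisfiable clauses. It is therefore value-preserving from \textsc{Max 3-SAT} (starting from \textsc{Max 2-SAT} also gives groups of size two), and would additionally give an inapproximability result for \textsc{PSIP-DG}$_{\mathrm{B},1}$, which the paper does not have.
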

\begin{proof}
The reduction is performed from the strongly $N\!P$-hard 3-SAT problem. Given $n$ Boolean variables $x_1, \dots, x_n$ and $q$ clauses $C_1, \dots, C_q$, with each clause consisting of the disjunction of three literals of the variables. Let $L:= \{x_j, \neg x_j : j \in \{1, \dots, n\}\}$ be the set of literals on the variables. We identify each clause with the set of its three literals (e.g., $C_5 = \{x_4, \neg x_3, x_7\}$). A \emph{truth assignment} is a subset $A \subseteq L$ of the literals containing exactly one literal for each variable, i.e., $|\{x_j, \neg x_j\} \cap A| = 1$ for each $j \in  \{1, \dots, n\}$. A clause $C_i$ is fulfilled by truth assignment $A$ if $C_i \cap A \neq \emptyset$. The task is to decide whether there exists a truth assignment that fulfills all clauses. Let $ m_q $ be the number of times the most frequent literals occur in the clauses.

 Given such an instance of \textsc{3-SAT}, we construct a project network with corresponding delay groups and interdiction budget. For each literal $ \ell_{ij}, j\in[3] $ in $ C_i $, there exists an associated \emph{literal gadget} consisting of a directed path with trace $(c_{ij}^1, c_{ij}^2, \dots, c_{ij}^{m_q})$. There exists an activity set $\{c_i: i\in [n]\}$ that corresponds one to one with the clauses. Activity $c_{i}$ is succeeded by $c_{i1}^1, c_{i2}^1, c_{i3}^1$ in parallel. Arcs exist from $c_{ij}^{m_q}$ to $c_{i+1}, ~j=1,2,3$. The construction can be seen in \autoref{fig:reduction_SAT}. The collection of delay groups is then defined as follows:
\begin{itemize}
  \item For every pair of clause positions $(i_1,j_1)$ and $(i_2,j_2)$
        such that $\ell_{i_1 j_1}$ and $\ell_{i_2 j_2}$ are complementary
        literals, we create a two-element delay group
        $\{c_{i_1 j_1}^l, c_{i_2 j_2}^l\}$,where $l \in [m_q]$ is the smallest index such that neither $c_{i_1 j_1}^l$ nor $c_{i_2 j_2}^l$ has yet been assigned to any group.
  \item If, for some $(i,j)$, fewer than $m_q$ copies $c_{ij}^l$ have been
        used in such pairs, the remaining copies are placed in singleton
        groups $\{c_{ij}^l\}$ so that each literal has in total exactly
        $m_q$ associated activity copies.
\end{itemize}

	\begin{figure}[h] 
		\centering 
		\begin{tikzpicture}[yscale=-1.6,xscale=1.5,thick,
			every transition/.style={draw=red,fill=red!20,minimum size=4.8mm},
			every place/.style={draw=black,minimum size=6mm}]
			\node[place,label=above:{}] (c1) at (-4,0) {$c_1$};
			\node[place,label=above:{}] (c2) at (0.2,0) {$c_2$};
			\node[place,label=above:{}] (c3) at (1.5,0) {$c_{q}$};
			\node[place,label=above:{}] (c4) at (5.7,0) {$c_{q+1}$};
			\node[place,label=above:{}] (l1) at (-3,-1) {$c^1_{11}$};
			\node[place,label=above:{}] (l4) at (2.5,-1) {$c^1_{q1}$};
			\node[place,label=above:{}] (l11) at (-2,-1) {$c^2_{11}$};
			\node[place,label=above:{}] (l41) at (3.5,-1) {$c^2_{q1}$};
			\node[place,label=above:{}] (l12) at (-0.8,-1) {$c^{m_q}_{11}$};
			\node[place,label=above:{}] (l42) at (4.7,-1) {$c^{m_q}_{q1}$};
			
			\node[place] (l2) at (-3,0) {$c_{12}^1$};
			\node[place,label=above:{}] (l21) at (-2,0) {$c^2_{12}$};
			\node[place,label=above:{}] (l22) at (-0.8,0) {$c^{m_q}_{12}$};
			\node[place] (l3) at (-3,1) {$c_{13}^1$};		
			\draw (c1) edge  (l1) edge [post]  (l2) edge [post]  (l3); 
			\node[place] (l31) at (-2,1) {$c^2_{13}$};
			\node[place,label=above:{}] (l32) at (-0.8,1) {$c^{m_q}_{13}$};
			\draw (l1) edge[post] (l11) (l2)edge [post]  (l21)  (l3)edge [post]  (l31); 
			\path (l11) --++ (1,0) node[midway,scale=1.5] {$\cdots$};
			\path (l21) --++ (1,0) node[midway,scale=1.5] {$\cdots$};
			\path (l31) --++ (1,0) node[midway,scale=1.5] {$\cdots$};
			\draw (c2) edge[pre] (l12) edge [pre]  (l22)  edge [pre]  (l32); 
			\path (c2) --++ (1,0) node[midway,scale=1.5] {$\cdots$};
			
			\node[place] (l5) at (2.5,0) {$c_{q2}^1$};
			\node[place,label=above:{}] (l51) at (3.5,0) {$c^2_{q2}$};
			\node[place,label=above:{}] (l52) at (4.7,0) {$c^{m_q}_{q2}$};
			\node[place] (l6) at (2.5,1) {$c_{q3}^1$};		
			\draw (c3) edge  (l4) edge [post]  (l5) edge [post]  (l6); 
			\node[place] (l61) at (3.5,1) {$c^2_{q3}$};
			\node[place,label=above:{}] (l62) at (4.7,1) {$c^{m_q}_{q3}$};
			\draw (l4) edge[post] (l41) (l5)edge [post]  (l51)  (l6)edge [post]  (l61); 
			\path (l41) --++ (1,0) node[midway,scale=1.5] {$\cdots$};
			\path (l51) --++ (1,0) node[midway,scale=1.5] {$\cdots$};
			\path (l61) --++ (1,0) node[midway,scale=1.5] {$\cdots$};
			\draw (c4) edge[pre] (l42) edge [pre]  (l52)  edge [pre]  (l62); 
		\end{tikzpicture}
		\caption{{Construction of the reduction from 3-SAT to \textsc{PSIP-DG}$_{\mathrm{B}, 1}$}\label{fig:reduction_SAT}} 
	\end{figure} 
The collection of delay groups $ \mathcal{V} $ is constructed as described above, where $m=|\mathcal{V}|$. Let $d_v=0$ for all $ v\in V$ and $ \Delta_r(v)=1$ for $v\in V_r$ and $r\in [m]$. The interdiction budget is set as $ k=q\cdot m_q $, which completes the construction of the problem instance. Next, we show that the maximum interdicted project makespan is $ q\cdot m_q $ if and only if there exists a satisfying truth assignment for the \textsc{3-SAT} instance. 

First, assume there exists a truth assignment $A$ fulfilling all clauses. For each $ i\in [q] $, there exists $ i(A)\in [3] $ such that $\ell_{ii(A)} \in A\cap C_i$, let the interdictor delay the activities $ c^1_{ii(A)}, c^2_{ii(A)},\dots, c^{m_q}_{ii(A)}$. According to the construction of the delay groups, there must be a delay group containing $ c^r_{ii(A)}, r\in [m_q] $, denoted by $ V^r_{ii(A)} $. It is obvious that $ V^{r_1}_{{i_1}{i_1}(A)}\cap V^{r_2}_{{i_2}{i_2}(A)}=\emptyset$ for any $ r_1, r_2\in [m_q] $ and $ i_1, i_2\in [q]$; otherwise $\ell(c_{i_1i_1(A)})=\neg\ell(c_{i_2i_2(A)})$, which contradicts the satisfying condition. Thus the interdictor can select $ q\cdot m_q $ delay groups $\{ V^r_{ii(A)}\}_{i\in [q], r\in [m_q]}$ by increasing the duration of activity $c^r_{ii(A)}$ to $ 1 $ in each attacked group $V^r_{ii(A)}$, which leads to a total project completion time of $ q\cdot m_q $. 

If the maximum project makespan equals $ q\cdot m_q $, it indicates that all activities on one critical path are interdicted. For $i\in [q]$, assume that on the critical path from $ c_i $ to $ c_{i+1} $, all the activities in the \emph{literal gadget} of $\ell_{ij(i)}$ are delayed, that are  $ \{c^r_{ij(i)}\}_{r\in[m_q]} $. We define $ A=\{\ell_{ij(i)}:~ i\in[q]\} $ as the truth assignment.  We will prove that for any $ i_1 \neq i_2 $, $ \ell(i_1j_{i_1}) \neq \neg\ell(i_2j_{i_2}) $ holds.

Suppose there exists $ i_1 \neq i_2 $ such that $ \ell(i_1j_{i_1})=\neg\ell(i_2j_{i_2})$. Then within the delay groups, there must exist some $ \text{index}_1, \text{index}_2\in [m_q] $ such that $\{c^{\text{index}_1}_{i_1j_1},c^ {\text{index}_2}_{i_2j_2}\}\in \mathcal{V}$. Given the constraint $ \ell_i=1 $, exactly one of the activities $ c^{\text{index}_1}_{i_1j_1} $  and $ c^ {\text{index}_2}_{i_2j_2}  $ can be interdicted. So the project makespan will be at most $ q\cdot m_q-1 $, which contradicts the assumption. Therefore, $ A\cap C_i =\ell(ij_i)$ and the assignments in set $ A $ do not conflict. Thus, the assignment $ A $ satisfies all clauses.
\end{proof}
Each delay group in the \textsc{PSIP-DG} instance constructed above consists of at most 2 activities. Hence, this problem remains $N\!P$-hard even when $|V_r|$ is relatively small but greater than 1. Another important observation from this reduction is that even if we set the interdiction budget as a large number, e.g., $k\geq q\cdot m_q$, \textsc{PSIP-DG}$_{\mathrm{B}, 1}$ remains $N\!P$-hard. 

\section{\texorpdfstring{Heuristic algorithms for PSIP-DG\_${\mathrm{B}}$}{Heuristic algorithms for PSIP-DG\_B}}  
\label{sec:heuristics}

The bilinear structure of DP($U^B$), which couples path variables and interdiction variables, suggests two complementary search spaces. SRS searches over paths, computing for each candidate path the exact optimal interdiction restricted to it; GSS searches over interdiction policies, computing for each candidate policy the exact longest path. Each heuristic thus performs local search in one block of variables while solving the other block exactly.
\subsection{Approximation algorithm}
\label{sec:approximation}
In this section, we focus on the design of algorithms to solve \textsc{PSIP-DG}$_{\mathrm{B}}$. In this case, the impact of delaying each single group, denoted as $ W_r $,  can be computed efficiently in polynomial time. A natural way to quickly find solutions to  \textsc{PSIP-DG}$_{\mathrm{B}}$ is to start from the empty attacked group set and to add elements one at a time, taking at each step that element that increases a longest path the most. The resulting  procedure, which we now define formally, is called the \emph{greedy heuristic}.
	\begin{breakablealgorithm}
	\caption{The greedy heuristic for \textsc{PSIP-DG}$_{\mathrm{B}}$}\label{greedy_heuristic}
	\begin{algorithmic}[]
		\State \textbf{Initialize}: $\mathcal{M}=[m]$, $ d_i^0=d_i$ for all $ i\in [m]$,  ${d}^0=\{d_i^0:~i\in [m] \}$, $ \mathcal F=\emptyset.$ 
		\For{$ t \gets 0$ to $k-1$} 
		\State \textbf{Step 1}: For each $ r\in \mathcal{M}$, solve \ref{dplr} w.r.t $d^0$ and calculate the optimal solution $ x_{r}^{t} $ and the maximum delay $ W_{r} $ of attacking the delay group $ V_r$.
	    \State \textbf{Step 2}: Schedule the delay groups 
	    in non-increasing attack-profit order with ties settled arbitrarily, such that $W_{r^{t}_1}\geq W_{r^{t}_2}\geq \dots \geq W_{r^{t}_{|\mathcal{M}|}}$. Consume one unit of the interdiction budget in the first group $ r^{t}_1 $ and delay the activities in the selected group $ r^{t}_1 $ according to the optimal solution $ x^{t}_{r^{t}_1}$. 
	    \State \textbf{Step 3}:  Update $\mathcal F:=\mathcal F\cup \{(r^{t}_1, x^{t}_{r^{t}_1})\}$, $\mathcal{M}=\mathcal{M}\backslash\{r^{t}_1\}$, $ {d}^{t+1}:= {d}^{t}+x^{t} $, and  $ t:=t+1 $.
		\EndFor
		\State \textbf{Output}: $ \mathcal F $
	\end{algorithmic}

\end{breakablealgorithm}

Let $ C_0 $ be the optimal makespan without disruption and let $C^*$ be the value of an optimal solution to problem \ref{dplr}. We will now show that the constructed algorithm is a $k$-approximation algorithm.

\begin{theorem}
	Algorithm \ref{greedy_heuristic} gives a solution with approximation guarantee of $ k $.
\end{theorem}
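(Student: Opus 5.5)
The plan is to compare the greedy solution with an optimal solution through the per-group gains $W_r$. Let $C^{\mathrm{OPT}}$ be the optimal value of \ref{dpu}, attained by a set $S^*$ of at most $k$ groups, delay vectors $u^{r}$ for $r\in S^*$, and a critical path $P^*$. Let $C^{G}$ be the makespan of the greedy output. It suffices to show $C^{\mathrm{OPT}}\le k\,C^{G}$.

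\textbf{Step 1: lower bound on the greedy value.} In the first iteration the greedy heuristic solves \ref{dplr} for every group with respect to the nominal durations and selects the group with the largest value $W_{\max}:=\max_{r\in[m]}W_r$. Each later iteration only increases durations, since $\Delta\ge 0$. The longest path is monotone in the durations, so the final makespan satisfies $C^{G}\ge W_{\max}$. Here $W_r$ denotes the full longest-path value in \ref{dplr}, that is, the nominal length plus the delay.

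\textbf{Step 2: decomposition of the optimum.} Write the length of $P^*$ under the optimal interdiction as
\[
C^{\mathrm{OPT}}=\sum_{i\in V(P^*)}d_i+\sum_{r\in S^*}\sum_{i\in V(P^*)\cap V_r}\Delta_i u_i^r .
\]
For each $r\in S^*$, the pair consisting of $P^*$ and the restriction $u^r$ is feasible for \ref{dplr} for group $r$, because $\sum_i u_i^r\le \ell_r$. Hence
\[
\sum_{i\in V(P^*)}d_i+\sum_{i\in V(P^*)\cap V_r}\Delta_i u_i^r\le W_r\le W_{\max}.
\]
Sum these inequalities over the at most $k$ groups in $S^*$ (if $S^*$ is empty, then $C^{\mathrm{OPT}}=C_0\le W_{\max}$ directly). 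The nominal part is counted $|S^*|\ge1$ times, so it is counted at least once. This gives $C^{\mathrm{OPT}}\le |S^*|\,W_{\max}\le k\,W_{\max}$, which requires $d\ge0$. Combined with Step 1, we get $C^{\mathrm{OPT}}\le k\,C^{G}$.

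\textbf{Step 3: the delicate point.} The main care is in Step 2, in bounding the cross terms. Each group's delay must be charged to a single-group problem, which requires that \ref{dplr} be evaluated on the nominal durations $d^0$. The algorithm as written uses $d^0$ in every iteration, so this holds. Only the first-iteration choice matters for the bound, and monotonicity handles the rest. One should also check that $W_r$ is the full path value rather than the increment $W_r-C_0$. If $W_r$ were read as the increment, the analogous bound would become $C^{\mathrm{OPT}}-C_0\le k\,(C^{G}-C_0)$, which holds by the same argument. Finally, to claim the ratio is exactly $k$, I would build a tightness example: $k$ disjoint groups that each delay the same path by one unit, alongside a distractor group with a slightly larger individual gain on a parallel path.
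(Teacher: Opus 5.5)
Your Steps 1--2 are correct, but they take a different route from the paper's, and yours is the more robust one. The paper argues with increments over $C_0$. It claims $C^*-C_0\le k\,(W_{r^0_1}-C_0)$ by contradiction: otherwise some attacked group would add more than $W_{r^0_1}-C_0$ to the optimal critical path $P^*$, hence $W_{r^*}-C_0>W_{r^0_1}-C_0$, contradicting the greedy's first choice. It then drops $C_0$ to obtain $C^*\le kW_{r^0_1}$.

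That inference needs $W_{r^*}\ge C_0+\delta_{r^*}$, which holds only when $P^*$ is nominally critical. In general one only has $W_{r^*}\ge L(P^*)+\delta_{r^*}$, where $L(P^*)=\sum_{i\in V(P^*)}d_i\le C_0$. Your argument uses exactly this weaker inequality and sums it over the $|S^*|\le k$ attacked groups. It then absorbs the $|S^*|-1$ surplus copies of $L(P^*)$ via $d\ge 0$, which gives $C^{\mathrm{OPT}}\le kW_{\max}\le kC^G$ without the increment inequality.

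Two side claims in your Step 3 are wrong, although the theorem does not need either.

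First, the increment version $C^{\mathrm{OPT}}-C_0\le k\,(C^G-C_0)$ does not follow ``by the same argument'', and it is false. Summing your inequalities leaves an extra term $(|S^*|-1)(C_0-L(P^*))\ge 0$. For a counterexample, take one path through an activity $a$ with $d_a=10$ and $\Delta_a=\varepsilon$, putting the dummies in $a$'s group with zero delay. Add a parallel path through $b_1,b_2$ with $d=0$ and $\Delta=10$, each $b_i$ its own group, and set $k=2$.
\begin{itemize}
\item Then $C_0=10$ and $W_a=10+\varepsilon>10=W_{b_1}=W_{b_2}$.
\item The greedy picks $a$ and then one $b_i$, so $C^G=10+\varepsilon$, whereas $C^{\mathrm{OPT}}=20$.
\item Hence $C^{\mathrm{OPT}}-C_0=10>2\varepsilon$, while $20\le 2(10+\varepsilon)$ is consistent with your full-value bound.
\end{itemize}
The same instance refutes the paper's intermediate inequality.

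Second, your tightness example with a single distractor does not give ratio $k$. After taking the distractor, the greedy spends its remaining $k-1$ picks on the path groups and reaches $\max\{1+\varepsilon,k-1\}$. The ratio is therefore only about $k/(k-1)$ for $k\ge 3$. Use $k$ distractors instead, each on its own parallel path with gain $1+\varepsilon$. The greedy can then spend its whole budget on them and ends at $1+\varepsilon$, against $C^{\mathrm{OPT}}=k$.
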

\begin{proof}
	We can derive from Step $ 2 $ of Algorithm \ref{greedy_heuristic} that  $ C^*-C_0\leq k(W_{{r^{0}_1}}-C_0)$. Suppose $ C^*-C_0> k(W_{{r^{0}_1}}-C_0)$ and let $ P^* $ be the critical path that determines the makespan. There exists at least one delay group $ V_{r^*} $ such that $ W_{{r^{*}}}-C_0>W_{{r^{0}_1}}-C_0.$ It contradicts the fact that delaying the delay group $ V_{r^{0}_1} $ in the first iteration has the most impact on the longest path. Hence $ C^*\leq kW_{{r^{0}_1}} $ and it gives a   $k$-approximation.
\end{proof}

\subsection[Heuristics]{Two structure-based heuristics: searching over paths and over policies}
\label{sec:localsearch}

We propose two heuristics for \textsc{PSIP-DG}$_{\mathrm{B}}$, each built around one of the two blocks of variables in the bilinear formulation \eqref{dpu}: the choice of a path and the choice of an interdiction policy. The \emph{Subpath Reoptimization Search} (SRS)  iteratively improves a candidate path and for each candidate solves the restricted interdiction problem exactly by dynamic programming. The \emph{Group Selection Search} (GSS) iteratively improves a candidate interdiction policy, exchanging selected groups and evaluating each candidate policy exactly via a longest-path computation. Both heuristics use the tractable substructures of PSIP-DG discussed in Section ~\ref{sec:tractable_cases}: SRS builds its initial path from a longest-path relaxation solved by dynamic programming, while GSS initializes from the exact per-group decomposition, selecting the $k$ groups with the largest individual contributions $W_r$. As Observation~\ref{obs:interaction} shows, these individual contributions cannot capture the interaction of groups along
shared paths; the search phases of both heuristics are designed precisely to recover this interaction. 

\begin{observation}\label{obs:interaction}
For $S \subseteq [m]$, let $f(S)$ denote the increase in the worst-case makespan when the groups in $S$ are attacked, and let $W_r = f(\{r\})$ denote the individual contribution of group $V_r$. In general, $f(S)$ is
not determined by the individual contributions $\{W_r\}_{r \in S}$, and $f$ is neither additive nor submodular. In the instance of Figure~\ref{fig:illustrative_case}, all three groups have identical individual contributions, $W_1 = W_2 = W_3 = 2$, yet $f(\{2,3\}) = 5 > W_2 + W_3$, while $f(\{1,3\}) = 4$.
\end{observation}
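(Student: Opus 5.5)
The plan is to verify the claim by direct computation on the instance of Figure~\ref{fig:illustrative_case}, and then to read off the three qualitative statements from the resulting values. The project network has only four $0$--$9$ paths: $P_1 = 0$--$1$--$4$--$6$--$9$, $P_2 = 0$--$2$--$4$--$6$--$9$, $P_3 = 0$--$1$--$7$--$9$ and $P_4 = 0$--$3$--$5$--$8$--$9$. For each attack set $S$ and each path, the interdictor's best delay inside the attacked groups is easy to compute. For $V_1$ and $V_2$ we have $\ell_r = |V_r|$, so every activity of the group on the path is delayed. For $V_3$ we have $\ell_3 = 1$, and activities $6$ and $8$ lie on different paths, so on each path the interdictor simply delays whichever of $6$ or $8$that path contains. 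Then $C_0 + f(S)$ is the maximum over the four paths.

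The computation proceeds as follows.

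\begin{itemize}
\item \textbf{Nominal durations.} The path lengths are $12, 13, 6, 11$, so $C_0 = 13$ and $f(\emptyset) = 0$.
\item \textbf{Attacking $V_1$ alone.} $P_1$ and $P_2$ both reach $15$, while $P_3$ and $P_4$ stay below. Hence $W_1 = 2$.
\item \textbf{Attacking $V_2$ alone.} $P_2$ becomes $5+6+4 = 15$ and $P_4$ becomes $2+9+4 = 15$. Hence $W_2 = 2$.
\item \textbf{Attacking $V_3$ alone.} Delaying $6$ gives $P_2 = 15$, and delaying $8$ gives $P_4 = 14$. Hence $W_3 = 2$.
\item \textbf{Attacking $\{2,3\}$.} $P_4$ becomes $2+9+7 = 18$, which dominates $P_2 = 17$. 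Hence $f(\{2,3\}) = 5$.
\item \textbf{Attacking $\{1,3\}$.} $P_1$ and $P_2$ become $17$, and $P_4$ becomes $16$. Hence $f(\{1,3\}) = 4$.
\end{itemize}

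From these values the three claims follow.

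\begin{itemize}
\item \textbf{Not determined by individual contributions.} The sets $\{1,3\}$ and $\{2,3\}$ have identical multisets of individual contributions, $\{2,2\}$, yet $f(\{1,3\}) = 4 \neq 5 = f(\{2,3\})$. So no function of $\{W_r\}_{r \in S}$ alone can equal $f(S)$.
\item \textbf{Not additive.} $f(\{2,3\}) = 5 > 4 = W_2 + W_3$.
\item \textbf{Not submodular.} Submodularity would require the marginal gain of adding group $2$ to the larger set $\{3\}$ to be at most its gain when added to $\emptyset$. Instead, $f(\{2,3\}) - f(\{3\}) = 3 > 2 = f(\{2\}) - f(\emptyset)$.
\end{itemize}

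The only step needing care is the evaluation of $f(\{2,3\})$ and $f(\{1,3\})$. Because of the $\ell_3 = 1$ restriction, the interdictor must decide per path whether delaying $6$ or $8$ is better. I would handle this by enumerating the four paths explicitly, which is exactly what the list above does.
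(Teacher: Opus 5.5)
Your proposal is correct and matches the paper's own justification, which rests entirely on evaluating the instance of Figure~\ref{fig:illustrative_case}: nominal makespan $13$, every single-group attack giving $15$, and the pair $\{2,3\}$ giving $18$ via $0$--$3$--$5$--$8$--$9$, exactly as your path-by-path enumeration shows. Your marginal-gain comparison $f(\{2,3\})-f(\{3\}) = 3 > 2 = f(\{2\})-f(\emptyset)$ and your contrast of $\{1,3\}$ with $\{2,3\}$ usefully spell out the non-submodularity and non-determinacy claims, which the paper leaves implicit.
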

We present SRS and GSS in the following subsections.

\subsubsection{Subpath reoptimization search (SRS)} 
The SRS heuristic (Algorithm~\ref{localsearch1}) first identifies an initial critical path in the network by dynamic programming, accounting for the possible delays under budgeted interdiction where applicable. For the case of \textsc{PSIP-DG}$_{\mathrm{B},1}$, we relax the constraint on the activity groups and assume that $k$ activities can be attacked. This relaxation transforms \textsc{PSIP-DG}$_{\mathrm{B}}$ to PSIP, allowing the maximum delayed length to be solved by dynamic programming. For the case of \textsc{PSIP-DG}$_{\mathrm{B},\infty}$, the initial path is instead the nominal longest path, onto which the group delays are subsequently added. This yields an initial path that approximates the most delay-sensitive structure in the network. Given the initial path, the algorithm determines the selection of delay groups that contributes the most to increasing the length of the path. This forms the initial solution. The detailed pseudocode of the underlying subroutines is provided in Algorithms~\ref{alg:find_max_delay_path}, \ref{alg:dp_group_B1} and~\ref{alg:dp_group_Binf} in the appendix. 

To explore the neighborhood of a given solution path, we iteratively modify a fixed-length subpath by replacing it with an alternative segment and evaluating the resulting delay. Specifically, for each interval of a predetermined number of edges on the current path, we reconnect the start and end nodes of the interval using a new subpath computed via dynamic programming. The budget is split between the two parts: for each $k' \le k$, the
algorithm allocates $k'$ units of the budget to the subpath and the remaining $k-k'$ units to the rest of the path, computing the maximum delayed subpath under this relaxation (the group structure is dropped, and for \textsc{PSIP-DG}$_{\mathrm{B},\infty}$ the per-group limits $\ell_r$ are flattened to a common bound $\bar\ell = \max_{r \in [m]} \ell_r$). The maximum delay of the residual path segment is approximated by applying the residual budget greedily. Every alternative whose approximate length is at least that of the current path is retained for exploration. The approximation serves only as a filter: retained neighbors are explored in breadth-first order, and the objective of each is
recomputed exactly by the dynamic program used in the initialization. The search stops once the neighbor limit $M$ is reached or no further
improvement is found. This procedure is depicted in Algorithm~\ref{alg:get_path_neighborhood}. In our experiments, $k$ and $M$ are fixed to  constants, so the practical running time is $\mathcal{O}\big(n^{2} + n|E| + n^{2}\log n\big).$

\begin{breakablealgorithm}  
\caption{Subpath reoptimization search (SRS)}
\label{localsearch1}
\begin{algorithmic}[1]
\State \textbf{Initialization:}
\If{PSIP-DG$_{\mathrm{B},1}$}
  \State $p^0 \gets \texttt{find\_max\_delay\_path}(G,s,t,k)$   \Comment{initial path, Alg.~A.1}
\Else
  \State compute a longest path $p^0$ in $G$
\EndIf
\State $(\text{delay}^0, S^0) \gets \texttt{DP\_on\_path}(p^0, k)$   \Comment{exact evaluation, Alg.~A.2/A.3}
\State $\text{best\_length} \gets \sum_{i\in p^0} d_i + \text{delay}^0$,\; $\text{best\_policy} \gets S^0$
\State $Q \gets \{p^0\}$,\quad $V \gets \emptyset$        \Comment{$V$: explored paths}
\While{$Q \neq \emptyset$ \textbf{and} $|V| < M$}
  \State select and remove a path $p$ from $Q$
  \State $(\mathcal N_p, l_p, S_p) \gets \texttt{get\_path\_neighborhood}(p, k, \texttt{num})$   \Comment{Alg.~A.4}
  \If{$l_p > \text{best\_length}$}
    \State $\text{best\_length} \gets l_p$,\; $\text{best\_policy} \gets S_p$
  \EndIf
  \State $V \gets V \cup \{p\}$
  \For{each $p' \in \mathcal N_p$ with $p' \notin V$}
    \State add $p'$ to $Q$
  \EndFor
\EndWhile
\State \textbf{Return:} $\text{best\_policy}, \text{best\_length}$

\end{algorithmic}  
\end{breakablealgorithm}

\subsubsection{Group Selection Search}

The GSS heuristic, as shown in Algorithm~\ref{localsearch2}, builds on the per-group decomposition of Section~\ref{sec:tractable_cases}: the maximum delay impact $W_r$ of each group $V_r$, $r \in [m]$, is
computed by solving problem \ref{dplr}, and the initial solution consists of the $k$ groups with the largest impacts. Based on this idea, we evaluate the optimal delay achievable for each group individually. The initial solution consists of the $k$ most impactful groups, providing a high-quality starting point. Neighborhood exploration iteratively perturbs the current solution: for \textsc{PSIP-DG}$_{\mathrm{B},1}$, the delayed activity of a selected group is exchanged with an activity drawn at random from a non-selected
group; for \textsc{PSIP-DG}$_{\mathrm{B},\infty}$, a selected group is swapped with a non-selected one. Each neighbor is evaluated by the induced longest-path length. A candidate whose length is at least that of the current solution is accepted and becomes the basis of the next round, while the best solution encountered is recorded separately. The process continues until $M$ distinct candidates have been evaluated or no new candidate can be generated.  The overall running time of GSS is $\mathcal{O}\big((m + M)|E| + m \log m\big)$.

\begin{breakablealgorithm}
\caption{Group Selection Search (GSS)}
\label{localsearch2}
\begin{algorithmic}[1]
\State \textbf{Initialization:}
\For{each $r \in [m]$}
  \State compute $W_r$ and attacked activities $\mathcal{I}_r^{*}$ by solving subproblem $\textsc{MAX}(\ell, r)$
\EndFor
\State rank groups in non-increasing order of $W_r$
\State select the top $k$ groups $\mathcal{R}_0$ and construct the initial policy
       $\mathcal{I}_0$ from $\{\mathcal{I}_r^{*}\}_{r \in \mathcal{R}_0}$
\State $\mathcal{I} \gets \mathcal{I}_0$,\; $\mathcal{N} \gets \emptyset$
\Comment{current policy}
\State $\text{cur\_length} \gets l(\mathcal{I}_0)$,\;
       $\text{best\_policy} \gets \mathcal{I}_0$,\; $\text{best\_length} \gets l(\mathcal{I}_0)$
\While{$|\mathcal{N}| < M$}
  \State draw $q$ random neighbours $\mathcal{N}(\mathcal{I})$ of $\mathcal{I}$, each exchanging
         $s$ selected items for $s$ non-selected ones:
  \State \quad \textbf{if} \textsc{PSIP-DG}$_{\mathrm{B},1}$: exchange activities, the incoming ones
         drawn from $s$ \emph{distinct} non-selected groups
  \State \quad \textbf{if} \textsc{PSIP-DG}$_{\mathrm{B},\infty}$: exchange entire groups
  \If{$\mathcal{N}(\mathcal{I}) \setminus \mathcal{N} = \emptyset$}\textbf{break}
    \Comment{no new neighbour can be generated}
  \EndIf
  \For{each $\mathcal{I}' \in \mathcal{N}(\mathcal{I}) \setminus \mathcal{N}$}
    \State compute $l(\mathcal{I}')$ exactly by a longest-path evaluation
    \If{$l(\mathcal{I}') \geq \text{cur\_length}$}
      \State $\mathcal{I} \gets \mathcal{I}'$,\; $\text{cur\_length} \gets l(\mathcal{I}')$
      \Comment{equal moves accepted: plateau drift}
    \EndIf
    \If{$l(\mathcal{I}') \geq \text{best\_length}$}
      \State $\text{best\_length} \gets l(\mathcal{I}')$,\; $\text{best\_policy} \gets \mathcal{I}'$
    \EndIf
    \State $\mathcal{N} \gets \mathcal{N} \cup \{\mathcal{I}'\}$
  \EndFor
\EndWhile
\State \textbf{Output:} $\text{best\_policy}$, $\text{best\_length}$
\end{algorithmic}
\end{breakablealgorithm}

\section{Computational study}  
\label{sec:experiment}
In this section, we conduct a computational study to evaluate the performance of SRS and GSS on \textsc{PSIP-DG}$_{\mathrm{B},1}$ and \textsc{PSIP-DG}$_{\mathrm{B},\infty}$. We tested different variants of the two heuristics to determine good parameter settings and combinations. We assess our methods on self-generated networks, comparing against an exact MIP approach solved by Gurobi. We report two metrics: solution quality, measured against the best solution found by Gurobi within a time limit of $3{,}600$ seconds, which is optimal whenever the solver proves optimality, and running time. The exact method is programmed using Gurobi 13.0.1, and the heuristic algorithms are implemented in Python 3.12.2. All experiments are run on an Intel(R) Core(TM) Ultra 9 processor at 2.90 GHz, with 64 GB of RAM.

\subsection {Test instances}
\subsubsection{Project network instances}

Since no benchmark instances exist for our problem, we generate random instances using RanGen \citep{vanhoucke2008evaluation, demeulemeester2003rangen}, which creates activity-on-the-node project networks with controlled characteristics: the parameter $I_1$ sets the number of activities, and $I_2 \in [0,1]$ measures the closeness of the network to a serial or parallel graph. As RanGen supports at most slightly fewer than $1{,}000$ activities, we vary the number of activities from 50 to 900 and, for each size, a subset of $I_2$ values between $0.01$ and $0.8$. The resulting networks range from small to large and from parallel to serial. RanGen also assigns durations to each activity, providing input data for the general project scheduling problems. 

To explore scalability beyond this range, we generate DAG instances with up to $5{,}000$ nodes by a layered procedure: the number of nodes per layer is drawn from a specified range, arcs are created between adjacent layers and within layers according to given probabilities,
and isolated nodes are connected to a random node in the next layer. Finally, the procedure adds a start node connected to every node without a predecessor and an end node preceded by the nodes of the last layer. Each node is then assigned a random integer duration. By varying the arc-creation probabilities and layer widths, we
generate instances at three edge-density levels for each target size: sparse, medium, and dense, with the average number of arcs per node ranging from about $4$ to $40$.
Since the number of nodes per layer is drawn at random, the nominal sizes are approximate; all tables report the realized $|V|$ and $|E|$.

\subsubsection{Random delay groups and interdiction budget}

To complete the input for \textsc{PSIP-DG}$_{\mathrm{B}}$, we randomly partition the activities into $m$ delay groups of balanced or randomly varying sizes.
The delay increments are drawn independently and uniformly at random from the integers in $[0, d_{\max}]$, where $d_{\max} = \max_j d_j$ is the largest nominal duration in the network; the dummy source and sink have zero duration and zero delay. Combined with varying interdiction budgets, this setting yields general-purpose test instances that are not tied to a specific application domain.

We classify the instances into small (50--200 activities), medium (400--600), large (800--900), and extra-large ($>1{,}000$); see \autoref{tab:instances} in Appendix~B for summary statistics. The sets of instances are named $S\_|V|$, where $|V| = I_1$ represents the number of non-dummy activities; each network contains $|V| + 2$ nodes including the dummy source and sink. Each instance is named $|V|\_I_2\_m\_k\_(\text{i}/\text{ii})$, with $m$ the number of delay groups, $k$ the interdiction budget, and i and ii indicating \textsc{PSIP-DG}$_{\mathrm{B},1}$ and \textsc{PSIP-DG}$_{\mathrm{B},\infty}$, respectively. For the extra-large instances, $I_2$ is replaced by the number of edges. 

\subsection{Results}

We now present the results of SRS and GSS on the instance sets described above and compare them with the exact MIP approach. The parameter settings of both heuristics follow a common rule established in preliminary tests and are kept fixed across all experiments: the neighbor limit $M$ of GSS is set to 100 for small, 200 for medium, and 500 for large and extra-large instances, while SRS uses $M = 100$ throughout. Since GSS is non-deterministic, each instance is run ten times and we report the average.

We solve the linearized formulation of \eqref{dpu} (see \ref{appendix: MIP}) with Gurobi under a time limit and denote by $Z_{\mathrm{G}}$ the objective value of the best solution found within the limit; when Gurobi proves optimality, $Z_{\mathrm{G}}$ is the optimal value. Let $\mathrm{ALG}$ denote the objective value of a heuristic and $L^*$ the longest path length in the undisturbed network. We report three metrics: the gap to Gurobi,
$\mathrm{Gap} = (Z_{\mathrm{G}} - \mathrm{ALG})/Z_{\mathrm{G}}$; the delay gap, $\mathrm{DGap} = (Z_{\mathrm{G}} - \mathrm{ALG})/(Z_{\mathrm{G}} - L^*)$, which measures the difference relative to the induced delay only, independently of the undisturbed makespan; and the improvement ratio of the search phase, $\mathrm{IMP} = (\mathrm{ALG} - \mathrm{ALG}^0)/(\mathrm{ALG}^0 - L^*)$, where $\mathrm{ALG}^0$ is the initial solution.

\subsection{Comparison of the algorithms}
\subsubsection[Results for \textsc{PSIP-DG}B1]{Results for \textsc{PSIP-DG}$_{\mathrm{B, 1}}$}

\autoref{tab:AVG_results_i} presents the average results of SRS and GSS, aggregated by network size $|V|$ and averaged over the topological parameter $I_2$, the number of delay groups $m$, and the interdiction budget $k$. It can be seen from the table that SRS outperforms GSS in terms of solution quality across all instance sets, with an average Gap below $0.3\%$ and even the stricter DGap stays below $0.8\%$ on average. The Gap of SRS does not go up with the number of activities  and is as low as $0.13\%$ for the largest sets ($S\_800$ and $S\_900$), where GSS performs clearly worse ($1.07\%$ at $S\_{900}$). Both algorithms are fast in this case: SRS requires at most four seconds on average and GSS less than $0.5$ seconds. However, the two searches behave quite differently. GSS almost always exhausts its neighbor limit ($N \approx M$ in all rows), whereas SRS reaches a better solution after exploring fewer than $20$ neighbors, at a higher cost per neighbor due to the reoptimization of each candidate path. Since Gurobi proves optimality
on every instance in these sets, the gaps are measured against the optimum: SRS finds an optimal solution on $85.4\%$ of the instances and falls within $1\%$ of the optimum on $94.7\%$ of the instances.

\begin{table}[htbp]
	\centering
	\begin{threeparttable}
	\caption{Comparison of SRS and GSS on different instance sets of \textsc{PSIP-DG}$_{\mathrm{B}, 1}$}	\label{tab:AVG_results_i}	
	\begin{tabular}{@{}lcccccccccc@{}}
		\toprule
		{}& \multicolumn{4}{c}{{SRS}} & \multicolumn{4}{c}{{GSS}} \\ 
		\cmidrule(lr){2-5} \cmidrule(lr){6-9}
		{Instances set} & {Gap} & {DGap} & ${T(s)}$ & ${N}$  & {Gap} & {DGap} & ${T(s)} $& ${N}$ \\
		\midrule
        $|V|=50$  & 0.26\% & 0.71\% & 0.00 & 1.28  & 0.48\% & 1.07\% & 0.00 & 89.72  \\
$100$     & 0.20\% & 0.56\% & 0.01 & 2.06  & 1.54\% & 3.49\% & 0.00 & 104.87 \\
$200$     & 0.16\% & 0.50\% & 0.13 & 7.35  & 1.97\% & 5.00\% & 0.01 & 107.03 \\
$400$     & 0.15\% & 0.41\% & 0.50 & 12.58 & 1.09\% & 3.35\% & 0.04 & 206.88 \\
$600$     & 0.24\% & 0.60\% & 2.67 & 19.28 & 2.10\% & 5.79\% & 0.12 & 208.32 \\
$800$     & 0.13\% & 0.35\% & 3.53 & 16.75 & 0.82\% & 2.38\% & 0.24 & 506.76 \\
$900$     & 0.13\% & 0.39\% & 2.23 & 14.84 & 1.07\% & 2.88\% & 0.43 & 507.19 \\
		\bottomrule
	\end{tabular}
    \begin{tablenotes}
    \small
    \item \emph{Note.} Each row averages over all $(I_2, m, k)$ configurations
    with the given number of nodes $|V|$, using $I_2 \in \{ 0.05, 0.1, 0.5, 0.8\}$.
    The number of configurations per row ranges from $39$ ($S\_50$) to
    $160$ ($S\_900$). $T(s)$: average CPU time in seconds; $N$: average number
    of explored neighbors; $\mathrm{Gap} = (Z_{\mathrm{G}}-\mathrm{ALG})/Z_{\mathrm{G}}$;
    $\mathrm{DGap} = (Z_{\mathrm{G}}-\mathrm{ALG})/(Z_{\mathrm{G}}-L^*)$.
    \end{tablenotes}
\end{threeparttable}
\end{table}

For the extra-large instances of \textsc{PSIP-DG}$_{\mathrm{B},1}$, both heuristics are near-optimal: the initial solutions are already close to the optimum, and the search phase improves them by at most $1\%$. Gurobi solves these instances quickly as well, so the heuristics offer no runtime advantage for this variant; detailed
results are reported in \ref{appendix:extralarge_i}
(\autoref{tab:AVG_results_large_i}). The situation is different for \textsc{PSIP-DG}$_{\mathrm{B},\infty}$, as we show next.

\subsubsection[Results for \textsc{PSIP-DG}BINF]{Results for \textsc{PSIP-DG}$_{\mathrm{B}, \infty}$}

\autoref{tab:AVG_results_ii} shows that GSS clearly outperforms SRS for \textsc{PSIP-DG}$_{\mathrm{B},\infty}$ in both solution quality and running time, and the advantage widens with the instance size. The Gap of GSS stays below $1.4\%$ across all sets, whereas the Gap
of SRS reaches $3.94\%$ at $S\_600$ and remains close to $4\%$ for the largest sets; the DGap shows the same picture, staying below $3.4\%$ for GSS while exceeding $11\%$ for SRS on the three largest sets. GSS is also much faster, requiring at most $0.2$ seconds on average compared to up to $17$ seconds for SRS. Interestingly, unlike in the \textsc{PSIP-DG}$_{\mathrm{B},1}$ case, GSS does not reach its neighbor limit here and terminates well before $N$ reaches $M$.

\begin{table}[htbp]
\centering
\begin{threeparttable}
\caption{Comparison of SRS and GSS on different instance sets of \textsc{PSIP-DG}$_{\mathrm{B},\infty}$}
\label{tab:AVG_results_ii}

\begin{tabular}{@{}lcccccccc@{}}
\toprule
 & \multicolumn{4}{c}{SRS} & \multicolumn{4}{c}{GSS} \\
\cmidrule(lr){2-5} \cmidrule(lr){6-9}
Instances set & Gap & DGap & $T(s)$ & $N$ & Gap & DGap & $T(s)$ & $N$ \\
\midrule
$|V|=50$  & 0.66\% & 1.75\%  & 0.01  & 12.10 & 0.05\% & 0.12\% & 0.00 & 57.64  \\
$100$     & 1.02\% & 2.85\%  & 0.15  & 47.71 & 0.58\% & 1.33\% & 0.00 & 77.31  \\
$200$     & 1.53\% & 3.99\%  & 0.67  & 56.75 & 0.99\% & 2.51\% & 0.01 & 87.32  \\
$400$     & 2.92\% & 8.62\%  & 4.39  & 84.88 & 0.75\% & 2.19\% & 0.02 & 156.04 \\
$600$     & 3.94\% & 11.38\% & 9.19  & 88.42 & 1.31\% & 3.38\% & 0.06 & 175.53 \\
$800$     & 3.58\% & 11.28\% & 16.43 & 95.56 & 0.45\% & 1.25\% & 0.13 & 363.28 \\
$900$     & 3.64\% & 11.38\% & 17.21 & 93.08 & 0.69\% & 1.91\% & 0.19 & 378.81 \\
\bottomrule
\end{tabular}
\begin{tablenotes}
\small
\item \emph{Note.} Same setup and metrics as Table~\ref{tab:AVG_results_i}.
\end{tablenotes}
\end{threeparttable} 
\end{table}

\begin{figure}[h!]
	\centering		\includegraphics[width=\textwidth]{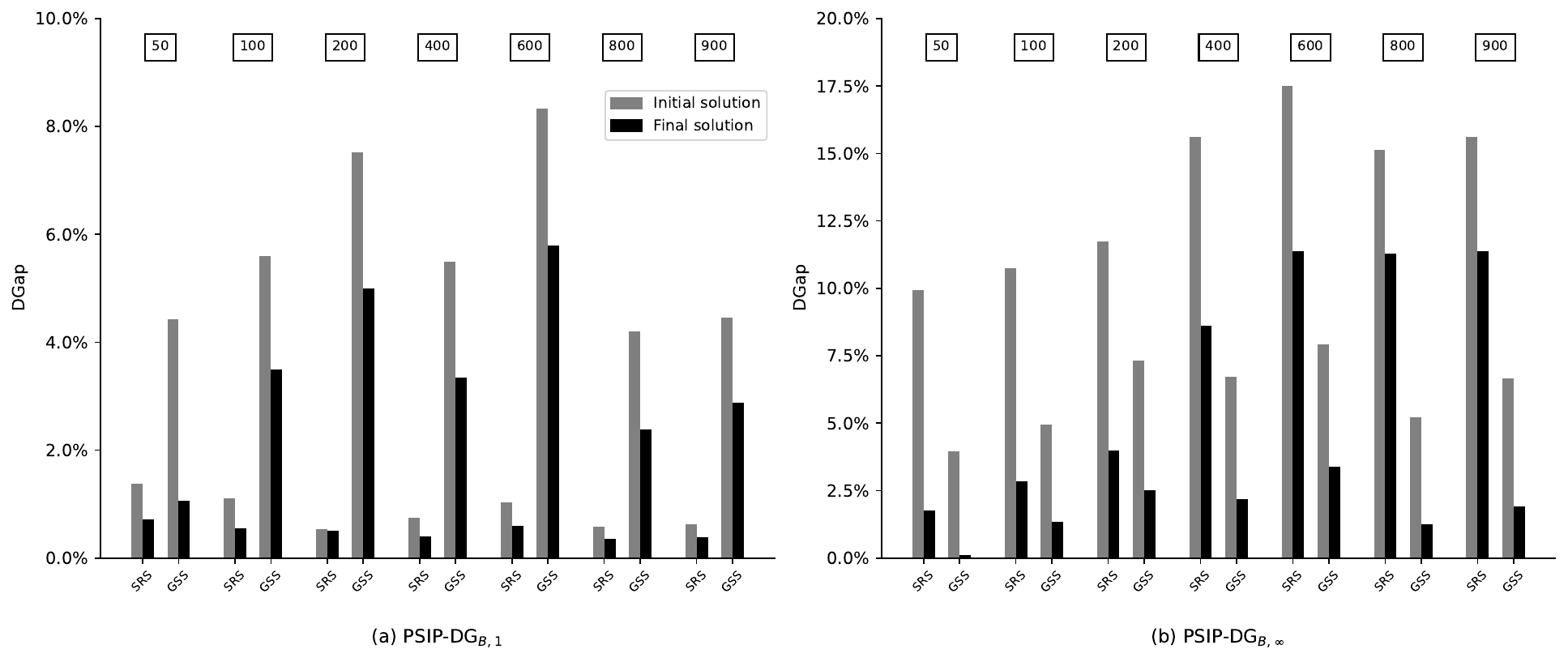}
	
\caption{DGap of the initial (grey) and final (black) solutions of
SRS and GSS for both variants.}\label{fig:ini}
\end{figure}
The results in \autoref{fig:ini} show that the selection of the
initial solution affects the performance of the heuristics and
illustrate the extent to which the search procedures enhance the
quality of the initial solutions. In \textsc{PSIP-DG}$_{\mathrm{B},1}$,
the initial path of SRS is already close to the best solution found:
on S\_900, its average DGap is $0.63\%$ initially and $0.39\%$
after the search, so the search has little room to improve.
Conversely, GSS more substantially improves the initial solution,
reducing the average gap from $4.45\%$ to $2.88\%$ on S\_900\_i, a
relative decrease of about $35\%$. The reason is that SRS builds its
initial path from a tight relaxation of the model (Section~4.2.1);
for the case where $\ell=1$, this relaxation is especially
tight, so the initial path is nearly optimal and the search converges
after exploring only a few neighbors.

On the other hand, in \textsc{PSIP-DG}$_{\mathrm{B},\infty}$ the
situation is reversed: the initial solution of GSS is already good,
whereas that of SRS is far from optimal. The initial DGap of GSS lies between about $4\%$ and $8\%$, while SRS starts with an initial DGap of $10\%$ to $17\%$. Consequently, the search phase of SRS must explore many more neighbors and, since each rerouted subpath is
reoptimized with a larger delay budget under \textsc{PSIP-DG}$_{\mathrm{B},\infty}$, each evaluation is also more expensive, resulting in a considerably longer running time than for
\textsc{PSIP-DG}$_{\mathrm{B},1}$ (\autoref{tab:AVG_results_ii}).

\begin{table}[h!]
\centering
\begin{threeparttable}
\caption{Comparison of GSS and Gurobi on medium, large, and
extra-large instance sets of \textsc{PSIP-DG}$_{\mathrm{B},\infty}$}
\label{tab:superior performance}
\begin{tabular}{@{}lcccccc@{}}
\toprule
 & \multicolumn{4}{c}{GSS} & \multicolumn{2}{c}{GRB} \\
\cmidrule(lr){2-5} \cmidrule(lr){6-7}
Instance set & $T(s)$ & $N$ & Gap & DGap & $T(s)$ & optimal \\
\midrule
S\_400            & 0.02 & 156.04 & 0.75\% & 2.19\% & 1.06   & all   \\
S\_600            & 0.06 & 175.53 & 1.31\% & 3.38\% & 7.32   & all   \\
S\_800            & 0.13 & 363.28 & 0.45\% & 1.25\% & 74.95  & all   \\
S\_900            & 0.19 & 378.81 & 0.69\% & 1.91\% & 110.00 & all   \\
\midrule
$|V|\approx 1200$ & 0.57 & 463.1  & 0.24\% & 0.91\% & 1532   & 12/18 \\
$2000$            & 1.50 & 462.1  & 0.19\% & 0.76\% & 3541   & 0/18  \\
$3000$            & 1.73 & 459.6  & 0.12\% & 0.48\% & 3116   & 2/18  \\
$4000$            & 2.23 & 450.8  & 0.03\% & 0.04\% & 3213   & 2/18  \\
$5000$            & 2.40 & 469.6  & 0.03\% & 0.15\% & 3542   & 0/18  \\
\bottomrule
\end{tabular}
\begin{tablenotes}
\small
\item \emph{Note.} Top block: medium and large instance sets; each row averages over all $(I_2,m,k)$ configurations as in Table~\ref{tab:AVG_results_ii}. Bottom block: extra-large sets; each row averages over three edge densities (sparse, medium, dense). The column "optimal" reports on how
many instances Gurobi proves optimality within the time limit; otherwise $Z_\mathrm{G}$ is Gurobi's best incumbent, so a negative Gap means that GSS finds a better solution. 
\end{tablenotes}
\end{threeparttable}
\end{table}
As shown in \autoref{tab:superior performance}, GSS matches the solution quality of Gurobi at a fraction of the running time on the medium and large instance sets. For the instance set $S\_800$, GSS attains an average Gap of $0.45\%$ in $0.13$ seconds, against about $75$ seconds for Gurobi, and the efficiency advantage widens with size: Gurobi
proves optimality on all these instances but needs up to almost an hour on the hardest ones, whereas GSS stays below $0.2$ seconds throughout. On instance 900\_0.1\_60\_5\_ii, for example, GSS finds a solution within $0.14\%$  (a DGap of $0.75\%$)  in $0.12$ seconds, while Gurobi takes $3{,}582$ seconds to find the optimal solution. Overall, the Gap of GSS ranges from $0.45\%$ to $1.31\%$ on these sets.

While Gurobi still proves optimality on the medium and large instance sets, it no longer scales to the extra-large networks. \autoref{tab:superior performance} shows that from $2{,}000$ nodes onward, Gurobi almost never proves optimality within the one-hour limit. On some extra-large instances, GSS not only matches but even improves on Gurobi's incumbent: the average Gap is essentially zero at $4{,}000$--$5{,}000$ nodes, and GSS finds a strictly better solution on $7$ of the $90$ instances. \autoref{tab:extralarge_sub} lists these instances; all of them are cases where Gurobi reached the time limit, and the improvement
even reaches $2.1\%$ in terms of DGap (e.g., $8671.4$ versus $8640$ on the instance $4972\_191855\_100\_15\_$ii). Part of this strength comes from the initialization: on the largest
instances, the initial selection of GSS is often already the best
solution found; on 1168\_18126\_100\_3\_ii and
1171\_4738\_100\_5\_ii, for example, the search does not improve the initial solution at all, and it coincides with the proven optimum. 
Moreover, GSS solves every instance,
including networks with more than $5{,}000$ nodes and $190{,}000$
edges, in at most $4.7$ seconds, whereas Gurobi's average running time is at least $50$ minutes on every extra-large set with $|V|\ge 2000$ and, on most instances, reaches the one-hour limit. \autoref{fig:gss_vs_grb} summarizes
the comparison: GSS tracks Gurobi's solution quality at a small
fraction of its running time. In summary, GSS reaches comparable, and even better, solutions in seconds on instances where Gurobi requires the full hour; detailed results can be found in
\ref{appendix:extralarge_ii} (\autoref{tab:extralarge_ii_full}).

\begin{table}[h!]
\centering
{\begin{threeparttable}
\caption{Instances on which GSS strictly outperforms Gurobi
(\textsc{PSIP-DG}$_{\mathrm{B},\infty}$)}\label{tab:extralarge_sub}
\begin{tabular}{@{}lccccc@{}}
\toprule
{} & \multicolumn{4}{c}{GSS} & {GRB} \\
\cmidrule(lr){2-5} \cmidrule(lr){6-6}
Instance & Gap & DGap & $T(s)$ & $N$ & $T(s)$ \\
\midrule
4972\_191855\_100\_15\_ii & $-0.36\%$ & $-2.09\%$ & 3.49 & 503.5 & TL \\
4299\_52560\_100\_10\_ii  & $-0.21\%$ & $-1.44\%$ & 1.21 & 504.0 & TL \\
3977\_130543\_100\_15\_ii & $-0.16\%$ & $-0.90\%$ & 2.11 & 504.4 & TL \\
5260\_76623\_100\_50\_ii  & $-0.11\%$ & $-0.29\%$ & 1.78 & 505.2 & TL \\
2976\_25269\_100\_10\_ii  & $-0.06\%$ & $-0.26\%$ & 0.52 & 500.3 & TL \\
2832\_69332\_100\_15\_ii  & $-0.03\%$ & $-0.17\%$ & 2.66 & 506.7 & TL \\
5260\_76623\_100\_10\_ii  & $-0.03\%$ & $-0.20\%$ & 1.91 & 504.6 & TL \\
\bottomrule
\end{tabular}
\begin{tablenotes}
\small
\item \emph{Note.} A negative Gap/DGap means GSS finds a strictly better
solution than Gurobi's incumbent. Gurobi reaches the one-hour time limit
without proving optimality solution on each listed instance, denoted by "TL".
\end{tablenotes}
\end{threeparttable}}
\end{table}

Overall, SRS obtains higher solution quality than GSS for
\textsc{PSIP-DG}$_{\mathrm{B},1}$, finding an optimal solution on
$85.4\%$ of the instances, at the cost of longer running times on
larger instances. For \textsc{PSIP-DG}$_{\mathrm{B},\infty}$, the
roles reverse: GSS is the more effective method, and on the extra-large instances it reaches solutions  within $0.12\%$ of Gurobi's incumbent on average, finding strictly better ones in some cases, in a few seconds even on networks of up to
$5{,}000$ nodes.
\begin{figure}[H]
\centering
\includegraphics[width=\textwidth]{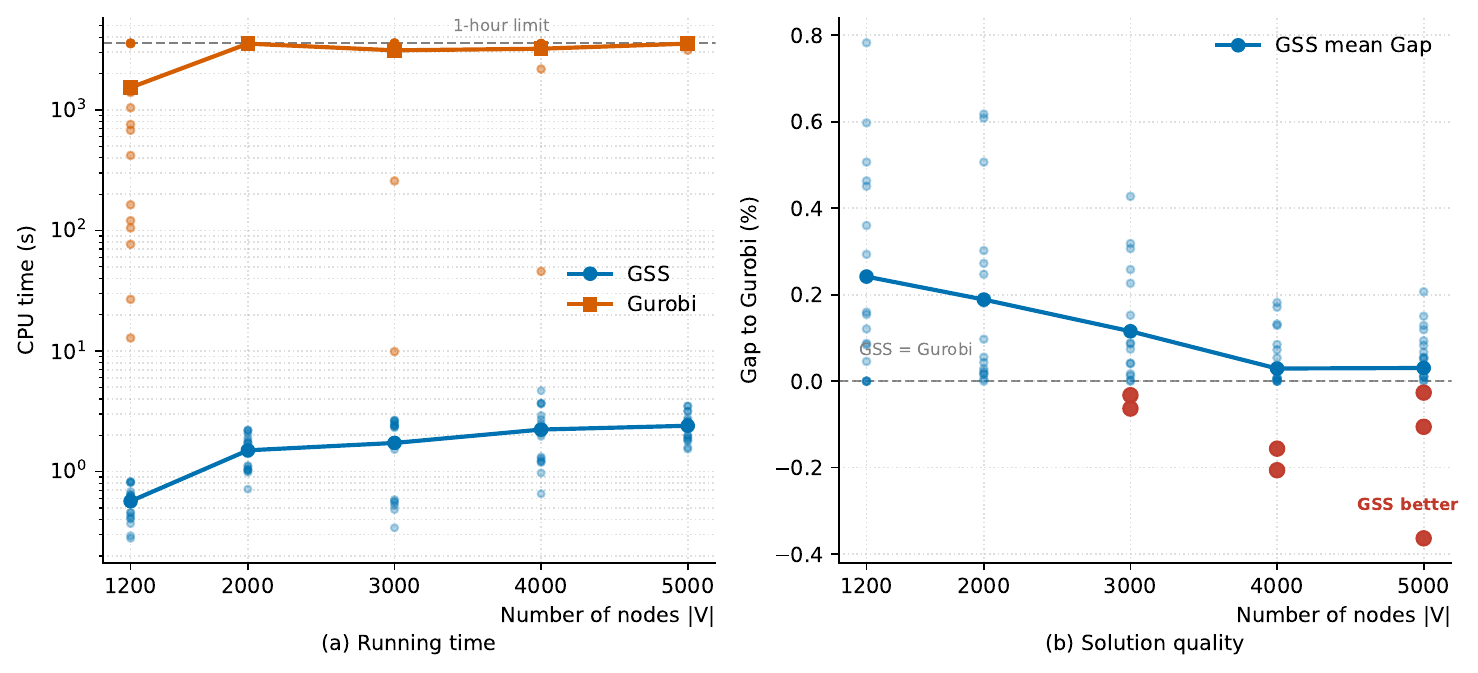}
\caption{Running time (a) and solution quality (b) of GSS
versus Gurobi on the extra-large \textsc{PSIP-DG}$_{\mathrm{B},\infty}$
instances. Faint markers are individual instances, solid lines are
means; in (b), a negative Gap (dark red) means that GSS finds a better
solution than Gurobi's incumbent.}
\label{fig:gss_vs_grb}
\end{figure}

\section{Conclusions}
\label{sec:conclusions}

We study the structural impact of correlated disruptions on  precedence-based project scheduling by introducing delay groups into a network interdiction model. Our model builds on earlier interdiction models \citep{brown2009interdicting} and can serve as a basis for future extensions to more general project settings. The interdictor selects, within a limited budget, a set of delay
groups whose activities are delayed according to given uncertainty
sets, so as to maximize the project's completion time. We investigate the computational complexity of \textsc{PSIP-DG} and its variants: the continuous variant of \textsc{PSIP} with
non-uniform interdiction costs is $N\!P$-hard, \textsc{PSIP-DG} with a general polyhedral uncertainty set is $N\!P$-hard, and the hardness holds for the budgeted uncertainty set, \textsc{PSIP-DG}$_{\mathrm{B}}$, even when at most one activity within each
group can be delayed and when all activities of an attacked group
can be delayed. We further derived an inapproximability bound for the general problem and a greedy $k$-approximation for
\textsc{PSIP-DG}$_{\mathrm{B}}$. On the positive side, we also identified network structures that allow \textsc{PSIP-DG} to be solved efficiently, combining dualization techniques and dynamic
programming.

Based on the fact that the maximum delay of each budgeted
uncertainty group can be computed in polynomial time, we propose two structure-based heuristics for \textsc{PSIP-DG}$_{\mathrm{B}}$: the Subpath Reoptimization Search (SRS), which iteratively improves a candidate path, and the Group Selection Search (GSS), which iteratively improves the set of attacked groups. Our computational study shows that both heuristics attain near-optimal solutions efficiently. SRS achieves the higher solution quality for \textsc{PSIP-DG}$_{\mathrm{B},1}$, whereas GSS performs better for \textsc{PSIP-DG}$_{\mathrm{B},\infty}$, matching the quality of Gurobi's incumbent within a fraction of the running time and  finding even better solutions for some extra-large instances. Although we consider only the extreme cases in which the number of affected activities per group is either one or unrestricted, our experiments indicate that this number strongly influences the performance of the exact solver: as it grows, Gurobi's running time increases at a far higher rate than that of our heuristics, so the good performance of Gurobi on \textsc{PSIP-DG}$_{\mathrm{B},1}$ does not extend to more general settings.

Several open problems arise naturally from this work.  Our hardness results leave a gap between known approximation algorithms and inapproximability bounds. Closing this gap, either by proving tighter lower bounds for specific project network classes (for example series-parallel or sparse PERT networks) or by designing algorithms with improved worst-case guarantees, would be an interesting challenge. Another promising direction is to identify additional polynomially solvable or easily approximable subclasses of \textsc{PSIP-DG}, for instance by bounding delay group sizes or by exploiting special precedence structures. It would also be interesting to extend the framework to alternative objective functions and more complex scheduling environments, for instance weighted completion time, measures of tardiness, or multi-project and resource-constrained settings. The resource-constrained extension is particularly interesting but
raises a non-trivial modeling question: under resource constraints,
the makespan is no longer determined by a longest path, so the inner problem of the interdiction model changes fundamentally. Finally, applying our framework to real project data, and possibly combining it with data-driven or distributionally robust uncertainty sets derived from empirical delay information, would help validate the practical value of delay groups and may suggest further modeling improvements.

\section*{Acknowledgments}
This research was supported by project G072520N of the Research Programme "Optimization and analytics for stochastic and robust project scheduling" of the Fund for Scientific Research—Flanders (Belgium) (F.W.O.-Vlaanderen).
\bibliographystyle{elsarticle-harv} 
\bibliography{reference}
\newpage
\appendix
\section{Algorithm details}\label{appendix: algorithms}

\makeatletter
\renewcommand{\thealgorithm}{A.\arabic{algorithm}}
\makeatother
This appendix provides detailed pseudocode for the dynamic programming subroutines and neighborhood generation procedures used in the heuristic algorithms presented in Section~\ref{sec:heuristics}. In particular, Algorithm~\ref{alg:find_max_delay_path} describes the dynamic programming routine for the singly constrained longest path problem, Algorithms~\ref{alg:dp_group_B1} and~\ref{alg:dp_group_Binf} specify the computation of maximum delay on a fixed path under group-wise budget constraints for \textsc{PSIP-DG}$_{\mathrm{B},1}$ and \textsc{PSIP-DG}$_{\mathrm{B},\infty}$, respectively, and Algorithm~\ref{alg:get_path_neighborhood} details the construction of path neighborhoods exploited by the SRS.

\begin{breakablealgorithm}
\caption{Singly constrained longest path 
$\texttt{find\_max\_delay\_path}(G, s, t, k)$}
\label{alg:find_max_delay_path}
\begin{algorithmic}[1]
\State \textbf{Input:} DAG $G = (V,E)$ with node durations $d_i$ and delay increments $\Delta_i$, 
start node $s$, end node $t$, budget $k$
\State \textbf{Output:} longest path $p^*$ from $s$ to $t$, set of delayed nodes $V^*$, maximum length $l^*$
\State topologically order the nodes in $G$ so that $s$ comes before $t$
\State initialize $f(i,c) \gets -\infty$ for all $i \in V$, $c = 0,\dots,k$
\State initialize $\text{pred}(i,c) \gets$ \texttt{null} and $\text{delayed}(i,c) \gets$ \texttt{false}
\For{$c = 0$ to $k$}
  \State $f(s,c) \gets 0$
\EndFor
\For{each node $j$ in topological order after $s$ up to $t$}
  \For{$c = 0$ to $k$}
    \State $f(j,c) \gets -\infty$, $\text{pred}(j,c) \gets$ \texttt{null}, $\text{delayed}(j,c) \gets$ \texttt{false}
    \For{each predecessor $i$ of $j$}
      \If{$f(i,c) + d_i > f(j,c)$}
        \State $f(j,c) \gets f(i,c) + d_i$
        \State $\text{pred}(j,c) \gets i$, $\text{delayed}(j,c) \gets$ \texttt{false}
      \EndIf
      \If{$c > 0$ \textbf{and} $f(i,c-1) + d_i + \Delta_i > f(j,c)$}
        \State $f(j,c) \gets f(i,c-1) + d_i + \Delta_i$
        \State $\text{pred}(j,c) \gets i$, $\text{delayed}(j,c) \gets$ \texttt{true}
      \EndIf
    \EndFor
  \EndFor
\EndFor
\State $l^* \gets f(t,k)$
\State reconstruct path $p^*$ and delayed set $V^*$ by backtracking from $(t,k)$ using $\text{pred}$ and $\text{delayed}$
\State \textbf{return} $p^*, V^*, l^*$
\end{algorithmic}
\end{breakablealgorithm}

\begin{breakablealgorithm}
\caption{DP for maximum delay on a path $\texttt{DP\_on\_path}(P, k)$ (PSIP-DG$_{\mathrm{B},1}$)}
\label{alg:dp_group_B1}
\begin{algorithmic}[1]
\State \textbf{Input:} path $P$, delay groups $\{V_r\}_{r=0}^{w-1}$, node delays $\Delta_i$, budget $k$

\State \textbf{Output:} maximum delay $\text{max\_delay}$, set of interdicted nodes $S^*$
\State initialize $f(j,c) \gets 0$ and $S(j,c) \gets \emptyset$ for all $j=-1,\dots,w-1$, $c=0,\dots,k$
\For{$j = 0$ to $w-1$}
  \State $C_j \gets V_j \cap P$
  \For{$c = 0$ to $k$}
    \State $f(j,c) \gets f(j-1,c)$, $S(j,c) \gets S(j-1,c)$
    \If{$c > 0$ \textbf{and} $C_j \neq \emptyset$}
      \State choose $i^* \in C_j$ with maximum $\Delta_i$
      \State $\text{val} \gets f(j-1,c-1) + \Delta_{i^*}$
      \If{$\text{val} > f(j,c)$}
        \State $f(j,c) \gets \text{val}$
        \State $S(j,c) \gets S(j-1,c-1) \cup \{ i^* \}$
      \EndIf
    \EndIf
  \EndFor
\EndFor
\State $\text{max\_delay} \gets f(w-1,k)$, $S^* \gets S(w-1,k)$
\State \textbf{return} $\text{max\_delay}, S^*$
\end{algorithmic}
\end{breakablealgorithm}
\begin{breakablealgorithm}
\caption{DP for maximum delay on a path $\texttt{DP\_on\_path}(P, k)$ (PSIP-DG$_{\mathrm{B},\infty}$)}
\label{alg:dp_group_Binf}
\begin{algorithmic}[1]
\State \textbf{Input:} path $P$, delay groups $\{V_r\}_{r=0}^{w-1}$, node delays $\Delta_i$, budget $k$
\State \textbf{Output:} maximum delay $\text{max\_delay}$, mapping $r \mapsto \mathcal{I}_r$ of attacked nodes

\vspace{1mm}
\State initialize $f(j,c) \gets 0$ and $S(j,c) \gets$ empty map for all $j=-1,\dots,w-1$, $c=0,\dots,k$
\For{$j = 0$ to $w-1$}
  \State $C_j \gets V_j \cap P$
  \State $\text{group\_delay} \gets \sum_{i \in C_j} \Delta_i$
  \For{$c = 0$ to $k$}
    \State $f(j,c) \gets f(j-1,c)$, $S(j,c) \gets S(j-1,c)$
    \If{$c > 0$ \textbf{and} $C_j \neq \emptyset$}
      \State $\text{val} \gets f(j-1,c-1) + \text{group\_delay}$
      \If{$\text{val} > f(j,c)$}
        \State $f(j,c) \gets \text{val}$
        \State $S(j,c) \gets S(j-1,c-1)$
        \State $S(j,c)[j] \gets C_j$ 
      \EndIf
    \EndIf
  \EndFor
\EndFor
\State $\text{max\_delay} \gets f(w-1,k)$, mapping $r \mapsto \mathcal{I}_r \gets S(w-1,k)$
\State \textbf{return} $\text{max\_delay}, \{ \mathcal{I}_r \}$
\end{algorithmic}
\end{breakablealgorithm}

\begin{breakablealgorithm}
\caption{Neighborhood of a path $\texttt{get\_path\_neighborhood}(p, k, \texttt{num}, \bar\ell)$}
\label{alg:get_path_neighborhood}
\begin{algorithmic}[1]
\State \textbf{Input:} path $p = (v_0,\dots,v_L)$, budget $k$, segment length $\texttt{num}$,
per-group delay cap $\bar\ell$
\Statex \hspace{1.2em} \Comment{$\bar\ell = 1$ for PSIP-DG$_{\mathrm{B},1}$; \; $\bar\ell = \max_r |V_r|$ for PSIP-DG$_{\mathrm{B},\infty}$}
\State \textbf{Output:} set of neighbor paths $N$, base length $l_p$, base policy $S_p$

\vspace{1mm}
\State $N \gets \emptyset$
\State compute $\text{max\_delay}_p, S_p \gets \texttt{DP\_on\_path}(p,k)$
\State $l_p \gets \sum_{i \in p} d_i + \text{max\_delay}_p$
\State $n_e \gets$ number of edges in $p$
\State $k_{\text{sub}} \gets \min\{k, 2 \cdot \texttt{num}\}$
\Comment{$k'$ counts \emph{groups}; the subpath receives $k'\bar\ell$ delay units}

\vspace{1mm}
\For{$j = 0$ to $n_e - \texttt{num}$}
  \State $s \gets v_j$, $t \gets v_{j + \texttt{num}}$
  \For{$k' = 0$ to $k_{\text{sub}}$}
    \State $(p_{\text{sub}}, V_{\text{sub}}, L_{\text{sub}}) \gets \texttt{find\_max\_delay\_path}(G, s, t, k'\bar\ell)$
    \If{$L_{\text{sub}} < 0$}
      \State \textbf{continue}
    \EndIf
    \State $p_{\text{rem}} \gets$ remaining nodes of $p$ outside segment $[j, j+\texttt{num}]$
    \State compute base length of $p_{\text{rem}}$: $L_{\text{rem}} \gets \sum_{i \in p_{\text{rem}}} d_i$
    \State let $R \gets$ nodes in $p_{\text{rem}}$ sorted in non-increasing order of $\Delta_i$
    \State choose top $(k - k')\bar\ell$ nodes in $R$ and add their delays to $L_{\text{rem}}$   \Comment{optimistic bound}
    \State construct modified path $p' \gets$ prefix of $p$ up to $v_j$ + $p_{\text{sub}}$ + suffix of $p$ after $v_{j+\texttt{num}}$
    \State $L_{p'} \gets L_{\text{sub}} + L_{\text{rem}}$
    \If{$L_{p'} \geq l_p$ and $p' \notin N$}
      \State $N \gets N \cup \{p'\}$
    \EndIf
  \EndFor
\EndFor

\vspace{1mm}
\State \textbf{return} $N, l_p, S_p$
\end{algorithmic}
\end{breakablealgorithm}

\section{Computational study}
\subsection{MILP formulation used with Gurobi}\label{appendix: MIP}
The formulation \ref{dplr} for solving \textsc{PSIP-DG}$_{\mathrm{B}}$ contains products of binary variables and thus leads to a non-linear objective function. To linearize this objective, we follow the arc-splitting technique of \citet{brown2005complexity}, which is also used in their MAXMAX1 formulation. The basic idea is to replace each physical arc by two parallel arcs with fixed lengths, and let the binary decision variables determine which of these arcs can carry flow.

For each arc $(i,j)\in E$ we introduce two flow variables:
$y_{ij}^{(1)}$ and $y_{ij}$.  The variable $y_{ij}^{(1)}$ represents
the default version of arc $(i,j)$ with length $d_i$, while
$y_{ij}$ represents the delayed version of the same arc with
length $d_i + \Delta_i$.  The objective function can then be written as
\begin{equation*}
\label{eq:lin-obj}
\max \quad
\sum_{(i,j)\in E} d_i\, y_{ij}^{(1)}
\;+\;
\sum_{(i,j)\in E} (d_i + \Delta_i)\, y_{ij}.
\end{equation*}
We solve the following MILP formulation for
\textsc{PSIP-DG}$_{\mathrm{B},1}$ and \textsc{PSIP-DG}$_{\mathrm{B},\infty}$ with Gurobi, where the parameter $\ell$ either equals $1$ or is chosen as a sufficiently large number such that $\ell \ge |V_r|$ for all $r \in [m]$.
\begin{alignat*}{2}
	&\max\limits_{y,y^{(1)}}&&\sum_{(i,j)\in E} d_i\, y_{ij}^{(1)}
  + \sum_{(i,j)\in E} (d_i + \Delta_i)\, y_{ij} \\
	&\mbox{s.t.}& \quad 
	&\sum_{i:(i,j)\in E} \begin{aligned}[t]\bigl(y_{ij}^{(1)} + y_{ij}\bigr)
  - \sum_{i:(j,i)\in E} \bigl(y_{ji}^{(1)} + y_{ji}\bigr)&=\begin{cases} 
		-1& \text{if } j=0 \\
		1 & \text{if } j=n+1\\
		0 & \text{if } j\in V\setminus\{0,n+1\}
	\end{cases}&{}&   \\
\sum_{\substack{(i,j)\in E\\ i \in V_r}} y_{ij}
\;&\le\;
\ell \, z_r,   && \forall~ r\in [m]\\
\sum_{r\in [m]}z_r&\leq k &{}&   \\
y_{ij}^{(1)},\, y_{ij}, z_r&\in\{0,1\}  &{}&  \forall~ (i,j)\in E,~ r\in [m]
\end{aligned}
\end{alignat*}

\subsection{Statistics for the test instances}\label{appendix:instances}
\begin{table}[H]
	\centering
	\scalebox{0.8}{\begin{threeparttable}
		\caption{Test problem statistics}
		\label{tab:instances}
		\begin{tabular}{ccccc}
			\toprule
			Category & $|V|$ & $I_2$ & $m$ & $k$ \\
			\midrule
			\multirow{3}{*}{Small-size}
			 & $50$  & $0.1,\,0.5,\,0.8$                                  & $5$--$20$  & $2$--$11$  \\
			 & $100$ & $0.1,\,0.5,\,0.8$                                  & $5$--$50$  & $2$--$25$  \\
			 & $200$ & $0.01,\,0.02,\,0.05,\,0.1,\,0.5,\,0.8$             & $5$--$100$ & $2$--$50$  \\
			\midrule
			\multirow{2}{*}{Medium-size}
			 & $400$ & $0.01,\,0.02,\,0.05,\,0.1,\,0.5,\,0.8$             & $5$--$150$ & $2$--$70$  \\
			 & $600$ & $0.01,\,0.02,\,0.05,\,0.1,\,0.5,\,0.8$             & $5$--$300$ & $2$--$120$ \\
			\midrule
			\multirow{2}{*}{Large-size}
			 & $800$ & $0.05,\,0.1,\,0.5,\,0.8$                           & $5$--$200$ & $2$--$75$  \\
			 & $900$ & $0.01,\,0.02,\,0.05,\,0.1,\,0.2,\,0.4,\,0.5,\,0.8$ & $5$--$300$ & $2$--$100$ \\
			\midrule
			\multirow{5}{*}{Extra-large}
			 & $1200$ & \textemdash & $100$ & $3$--$50$ \\
			 & $2000$ & \textemdash & $100$ & $3$--$50$ \\
			 & $3000$ & \textemdash & $100$ & $3$--$50$ \\
			 & $4000$ & \textemdash & $100$ & $3$--$50$ \\
			 & $5000$ & \textemdash & $100$ & $3$--$50$ \\
			\bottomrule
		\end{tabular}
		\begin{tablenotes}
			\small
			\item \emph{Note.}  The extra-large sets are generated with a layered-DAG generator for which $I_2$ does not apply (\textemdash); instead, three edge
			densities (sparse, medium, dense) are produced per target size, with $m=100$ and $k\in\{3,5,10,15,30,50\}$. For these sets, $|V|$ denotes
			the target size; the actual number of activities varies (e.g., the "$2000$" size ranges from about $2{,}100$ to $2{,}400$ nodes).
		\end{tablenotes}
	\end{threeparttable}}
\end{table}

\subsection{\texorpdfstring{Results for extra-large instances in the case of \textsc{PSIP-DG}$_{\mathrm{B},1}$}{Results for extra-large instances in the case of PSIP-DG-B1}} \label{appendix:extralarge_i}

For the extra-large \textsc{PSIP-DG}$_{\mathrm{B},1}$ instances
(\autoref{tab:AVG_results_large_i}), the initial solutions of both
heuristics are already close to the optimum, and the search phase
changes them very little: the average improvement over the initial
solution is at most $1.00\%$ for GSS and at most $0.10\%$ for SRS. The relaxation underlying the SRS initialization is particularly tight for this variant (Section~4.2.1), so SRS terminates after exploring only a handful of neighbors ($N$ as low as $1$), with a Gap that rounds to $0.00\%$ on every set; GSS is likewise near-optimal (Gap $\le 0.09\%$). Gurobi, however, also solves these instances quickly (on average $2.33$ seconds) and remains the exact reference, so here the heuristics offer no runtime advantage over the solver. The picture changes for \textsc{PSIP-DG}$_{\mathrm{B},\infty}$, where Gurobi fails to prove
optimality, as we discuss next.

\begin{table}[h!]
\centering
\scalebox{1}{\begin{threeparttable}
\caption{Comparison of SRS and GSS on extra-large instances of \textsc{PSIP-DG}$_{\mathrm{B},1}$}
\label{tab:AVG_results_large_i}
\begin{tabular}{@{}lcccccccccc@{}}
\toprule
 & \multicolumn{5}{c}{SRS} & \multicolumn{5}{c}{GSS} \\
\cmidrule(lr){2-6} \cmidrule(lr){7-11}
Instance set & Gap & DGap & $T(s)$ & $N$ & IMP & Gap & DGap & $T(s)$ & $N$ & IMP \\
\midrule
1168\_18126  & 0.00\% & 0.00\% & 2.62  & 19.7 & 0.07\% & 0.03\% & 0.15\% & 0.51 & 507.0 & 0.29\% \\
1171\_4738   & 0.00\% & 0.00\% & 1.70  & 17.7 & 0.00\% & 0.09\% & 0.43\% & 0.24 & 507.0 & 1.00\% \\
1183\_11994  & 0.00\% & 0.00\% & 0.23  & 1.8  & 0.00\% & 0.00\% & 0.00\% & 0.40 & 507.0 & 0.00\% \\
2132\_15791  & 0.00\% & 0.00\% & 5.08  & 17.5 & 0.00\% & 0.05\% & 0.39\% & 0.82 & 508.2 & 0.82\% \\
2322\_57931  & 0.00\% & 0.00\% & 7.41  & 17.5 & 0.00\% & 0.01\% & 0.08\% & 1.43 & 508.4 & 0.23\% \\
2419\_35726  & 0.00\% & 0.03\% & 6.68  & 17.5 & 0.03\% & 0.02\% & 0.18\% & 1.15 & 508.1 & 0.17\% \\
2832\_69332  & 0.00\% & 0.00\% & 0.80  & 1.0  & 0.00\% & 0.00\% & 0.00\% & 2.12 & 508.8 & 0.00\% \\
2976\_25269  & 0.00\% & 0.00\% & 8.86  & 17.5 & 0.00\% & 0.01\% & 0.10\% & 1.22 & 508.3 & 0.21\% \\
3031\_53716  & 0.00\% & 0.03\% & 12.03 & 18.3 & 0.10\% & 0.00\% & 0.00\% & 1.93 & 508.6 & 0.00\% \\
3977\_130543 & 0.00\% & 0.00\% & 2.68  & 2.0  & 0.00\% & 0.00\% & 0.00\% & 4.43 & 509.3 & 0.00\% \\
4169\_91592  & 0.00\% & 0.00\% & 2.85  & 2.0  & 0.00\% & 0.00\% & 0.00\% & 3.88 & 509.1 & 0.00\% \\
4299\_52560  & 0.00\% & 0.00\% & 1.72  & 1.0  & 0.00\% & 0.01\% & 0.18\% & 2.91 & 509.4 & 0.73\% \\
4972\_191855 & 0.00\% & 0.00\% & 2.97  & 1.0  & 0.00\% & 0.00\% & 0.00\% & 7.72 & 509.4 & 0.00\% \\
5120\_140115 & 0.00\% & 0.00\% & 2.89  & 1.0  & 0.00\% & 0.00\% & 0.00\% & 7.02 & 509.4 & 0.00\% \\
5260\_76623  & 0.00\% & 0.00\% & 2.53  & 1.0  & 0.00\% & 0.00\% & 0.00\% & 4.78 & 509.2 & 0.00\% \\
\bottomrule
\end{tabular}
\begin{tablenotes}
\small
\item \emph{Note.} Instance sets are named $|V|\_|E|$. Each row averages over $k\in\{3,5,10,15,30,50\}$, with $m=100$ and one instance per size--density combination. All instances are solved to proven optimality, so gaps are against the true optimum. $T(s)$: average CPU time in seconds; $N$: average number of explored
neighbors; $\mathrm{Gap}=(Z_\mathrm{G}-\mathrm{ALG})/Z_\mathrm{G}$;
$\mathrm{DGap}=(Z_\mathrm{G}-\mathrm{ALG})/(Z_\mathrm{G}-L^*)$;
IMP: average improvement of the search over the initial solution,
$\mathrm{IMP}=(\mathrm{ALG}-\mathrm{ALG}^0)/(\mathrm{ALG}^0-L^*)$.
\end{tablenotes}
\end{threeparttable}}
\end{table}

\subsection{\texorpdfstring{Results for extra-large instances in the case of \textsc{PSIP-DG}$_{\mathrm{B},\infty}$}{Results for
extra-large instances in the case of PSIP-DG-B-inf}}\label{appendix:extralarge_ii}
\begin{table}[htbp]
\centering
\scalebox{1}{\begin{threeparttable}
\caption{Detailed comparison of GSS and Gurobi on the extra-large instances of \textsc{PSIP-DG}$_{\mathrm{B},\infty}$}
\label{tab:extralarge_ii_full}
\begin{tabular}{@{}lcccccc@{}}
\toprule
 & \multicolumn{4}{c}{GSS} & \multicolumn{2}{c}{GRB} \\
\cmidrule(lr){2-5} \cmidrule(lr){6-7}
Instance & Gap & DGap & $T(s)$ & $N$ & $T(s)$ & opt \\
\midrule
1168\_18126  & 0.15\%  & 0.50\%  & 0.72 & 453.0 & 667  & 6/6 \\
1171\_4738   & 0.36\%  & 1.42\%  & 0.38 & 454.2 & 1998 & 3/6 \\
1183\_11994  & 0.21\%  & 0.81\%  & 0.61 & 482.1 & 1932 & 3/6 \\
2132\_15791  & 0.31\%  & 1.25\%  & 1.00 & 483.3 & 3563 & 0/6 \\
2322\_57931  & 0.12\%  & 0.47\%  & 1.92 & 446.3 & 3558 & 0/6 \\
2419\_35726  & 0.14\%  & 0.57\%  & 1.58 & 456.6 & 3501 & 0/6 \\
2832\_69332  & 0.12\%  & 0.55\%  & 2.39 & 463.9 & 3561 & 0/6 \\
2976\_25269  & 0.04\%  & 0.14\%  & 0.51 & 446.6 & 2383 & 2/6 \\
3031\_53716  & 0.18\%  & 0.75\%  & 2.28 & 468.3 & 3405 & 0/6 \\
3977\_130543 & 0.03\%  & 0.03\%  & 2.96 & 451.8 & 3536 & 0/6 \\
4169\_91592  & 0.04\%  & 0.13\%  & 2.63 & 454.5 & 2719 & 2/6 \\
4299\_52560  & 0.02\%  & $-0.04$\% & 1.11 & 446.1 & 3384 & 0/6 \\
4972\_191855 & $-0.00$\% & $-0.01$\% & 3.02 & 465.3 & 3572 & 0/6 \\
5120\_140115 & 0.07\%  & 0.32\%  & 2.36 & 468.6 & 3564 & 0/6 \\
5260\_76623  & 0.03\%  & 0.15\%  & 1.82 & 475.0 & 3489 & 0/6 \\
\bottomrule
\end{tabular}
\begin{tablenotes}
\small
\item \emph{Note.} Instance sets are named $|V|\_|E|$; each row averages over
$k\in\{3,5,10,15,30,50\}$ ($m=100$). ``opt'': number of the six $k$-values for
which Gurobi proved optimality within the one-hour limit. A negative Gap/DGap
(averaged over $k$) means GSS is better than Gurobi on average.
$\mathrm{Gap}=(Z_\mathrm{G}-\mathrm{ALG})/Z_\mathrm{G}$;
$\mathrm{DGap}=(Z_\mathrm{G}-\mathrm{ALG})/(Z_\mathrm{G}-L^*)$;
$T(s)$: average CPU time.
\end{tablenotes}
\end{threeparttable}}
\end{table}

\end{document}